  \theoremstyle{remark}
  \newtheorem{remark}{\protect\remarkname}
  \theoremstyle{plain}
  \newtheorem{lemma}{\protect\lemmaname}
  \theoremstyle{plain}
  \newtheorem{proposition}{\protect\propositionname}
\theoremstyle{plain}
\newtheorem{theorem}{\protect\theoremname}
  \providecommand{\lemmaname}{Lemma}
  \providecommand{\propositionname}{Proposition}
  \providecommand{\remarkname}{Remark}
\providecommand{\theoremname}{Theorem}
  \providecommand{\lemmaname}{Lemma}
  \providecommand{\propositionname}{Proposition}
  \providecommand{\remarkname}{Remark}
\providecommand{\theoremname}{Theorem}
  \providecommand{\lemmaname}{Lemma}
  \providecommand{\propositionname}{Proposition}
  \providecommand{\remarkname}{Remark}
\providecommand{\theoremname}{Theorem}
\begin{document}

\title{Explicit solutions of the kinetic and potential matching conditions of the energy shaping method}

\author{Sergio D. Grillo\\
 Instituto Balseiro, UNCuyo-CNEA\\
 San Carlos de Bariloche, Río Negro, República Argentina \and Leandro
M. Salomone\\
 CMaLP, Facultad de Ciencias Exactas, UNLP\\
 La Plata, Buenos Aires, República Argentina \and Marcela Zuccalli\\
 CMaLP, Facultad de Ciencias Exactas, UNLP\\
 La Plata, Buenos Aires, República Argentina}
\maketitle
\begin{abstract}
In this paper we present a procedure to integrate, up to quadratures,
the matching conditions of the energy shaping method. We do that in
the context of underactuated Hamiltonian systems defined by simple
Hamiltonian functions. For such systems, the matching conditions split
into two decoupled subsets of equations: the \textit{kinetic} and
\textit{potential} equations. First, assuming that a solution of the
kinetic equation is given, we find integrability and positivity conditions
for the potential equation (because positive-definite solutions are
the interesting ones), and we find an explicit solution of the latter.
Then, in the case of systems with one degree of underactuation, we
find in addition a concrete formula for the general solution of the
kinetic equation. An example is included to illustrate our results. 
\end{abstract}

\section{Introduction}

The \emph{energy shaping} method is a technique for achieving (asymptotic)
stabilization of underactuated Lagrangian and Hamiltonian systems.
See Ref. \cite{CBLMW02} for a review on the subject and \cite{chang4,romero}
for more recent works. In this paper, we shall concentrate on Hamiltonian
systems only. We shall represent underactuated Hamiltonian systems
by pairs $\left(H,\mathcal{W}\right)$, where $H:T^{*}Q\rightarrow\mathbb{R}$
is a Hamiltonian function on a finite-dimensional smooth manifold
$Q$, and $\mathcal{W}$ is a (proper) subbundle of the vertical bundle
of $T^{*}Q$, containing the actuation directions.

The energy shaping method is based on the idea of \emph{feedback equivalence}
\cite{CBLMW02}, and its purpose is to construct, for a given pair
$\left(H,\mathcal{W}\right)$, a state feedback controller and a Lyapunov
function $\hat{H}:T^{*}Q\rightarrow\mathbb{R}$ for the resulting
closed-loop system. To do that, a set of partial differential equations
(PDEs), known as \emph{matching conditions}, must be solved. Such
PDEs have the pair $\left(H,\mathcal{W}\right)$ as datum and the
aforementioned Lyapunov function $\hat{H}$ as their unknown.

In the Chang version of the method \cite{chang1,chang2,chang3,chang4},
on which we shall focus, only simple functions $H$ and $\hat{H}$
are considered (i.e. functions with the \textit{kinetic plus potential
energy} form), and the subbundle $\mathcal{W}$ is assumed to be the
vertical lift of a subbundle $W\subseteq T^{*}Q$. In such a case,
the matching conditions decompose into two subsets: the \textit{kinetic
}and\textit{ potential matching conditions}, or simply, the \textit{kinetic}
and \textit{potential }equations. In canonical coordinates $\left(\mathbf{q},\mathbf{p}\right)$,
if we write (sum over repeated indices is assumed) 
\begin{equation}
H(\mathbf{q},\mathbf{p})=\frac{1}{2}\,p_{i}\,\mathbb{H}^{ij}(\mathbf{q})\,p_{j}+h(\mathbf{q})\label{H}
\end{equation}
and 
\begin{equation}
\hat{H}(\mathbf{q},\mathbf{p})=\frac{1}{2}\,p_{i}\,\hat{\mathbb{H}}^{ij}(\mathbf{q})\,p_{j}+\hat{h}(\mathbf{q}),\label{V}
\end{equation}
such equations read {[}see Ref. \cite{lyap-shap}, Eqs. $\left(56\right)$
and $\left(57\right)${]} 
\begin{equation}
\left(\frac{\partial\hat{\mathbb{H}}^{ij}(\mathbf{q})}{\partial q^{k}}\,\mathbb{H}^{kl}(\mathbf{q})-\frac{\partial\mathbb{H}^{ij}(\mathbf{q})}{\partial q^{k}}\,\hat{\mathbb{H}}^{kl}(\mathbf{q})\right)\,p_{i}p_{j}p_{l}=0\label{kec}
\end{equation}
(the kinetic equation) and 
\begin{equation}
\left(\frac{\partial\hat{h}(\mathbf{q})}{\partial q^{k}}\,\mathbb{H}^{kl}(\mathbf{q})-\frac{\partial h(\mathbf{q})}{\partial q^{k}}\,\hat{\mathbb{H}}^{kl}(\mathbf{q})\right)\,p_{l}=0\label{pec}
\end{equation}
(the potential equation), and they must hold for all $\mathbf{p}\in W_{\mathbf{q}}^{\perp}$
(the orthogonal must be calculated with respect to the metric defined
by $\hat{\mathbb{H}}$). Once a solution $(\hat{\mathbb{H}},\hat{h})$
of these equations is found, the method gives a prescription to construct
a state feedback controller. So, we can say that the core of the method
consists of solving above PDEs.

Let us mention that one looks for a solution $\hat{h}$ of the potential
equation which is positive-definite around some critical point of
$h$. This insures that the function $\hat{H}$ is a Lyapunov function
for the resulting closed-loop system and the given critical point.

In Reference \cite{lewis}, necessary and sufficient conditions were
given for the existence of a solution $\hat{h}$ of the potential
equation, once a solution $\hat{\mathbb{H}}$ of the kinetic equation
is given (note that $\hat{\mathbb{H}}$ can be seen as a datum for
the potential equation). This was done within the framework of the
Goldschmidt's integrability theory for linear partial differential
equations \cite{gold}, which only works in the analytic category.
Nevertheless, no conditions have been presented in order to ensure
the existence of positive-definite solutions. Also, no general recipe
to construct an explicit solution $\hat{h}$ has been developed. The
first goal of the present paper is three-fold: 
\begin{itemize}
\item to extend the results of \cite{lewis} to the $C^{\infty}$ category, 
\item to include positivity conditions, 
\item and to present a systematic procedure to integrate the potential equation
up to quadratures. 
\end{itemize}
Regarding the kinetic equation, few general results are known about
the existence of solutions. In the case of underactuated systems with
one degree of underactuation, the problem was completely solved in
References \cite{chang1,chang2,assym,ghar}. However, a general prescription
for finding explicit solutions is still lacking. The second goal of
this paper is, for one degree of underactuaction, to give such a prescription.

The paper is organized as follows. In Section \S \ref{rewr} we re-define
the unknown $(\hat{\mathbb{H}},\hat{h})$ of the matching conditions.
This gives rise to a new set of equations, in terms of which we shall
study a particular subclass of solutions of the kinetic equation.
Given $\hat{\mathbb{H}}$ inside the mentioned subclass, in Section
\S \ref{solvp} we find sufficient conditions for the existence of
solutions $\hat{h}$ of the potential equation. Also, sufficient conditions
to ensure positive-definiteness of $\hat{h}$ are given. All of these
conditions together give rise to a procedure that enable us to find
an explicit expression for local solutions of the potential equation
(up to quadratures). Finally, we devote Section \S \ref{sec:potkincuad}
to extend the procedure to integrate up to quadratures the kinetic
equation also, but in the particular subclass of underactuated Hamiltonian
systems with only one degree of underactuation. To conclude, we apply
such a procedure to the inverted double pendulum.

\bigskip{}

We assume that the reader is familiar with basic concepts of Differential
Geometry \cite{boot,kn,mrgm}, Hamiltonian systems in the context
of Geometric Mechanics \cite{am,ar,mr} and Control Theory in a geometric
language \cite{bloc,bullo}.

\bigskip{}

\textbf{Basic notation and definitions.} Along all of the paper, $Q$
will denote a smooth connected manifold of dimension $n$ and $\left(TQ,\tau,Q\right)$
and $\left(T^{\ast}Q,\pi,Q\right)$ the tangent bundle and its dual
bundle, respectively. As it is customary, we denote by $\left\langle \cdot,\cdot\right\rangle $
the natural pairing between $T_{q}Q$ and $T_{q}^{\ast}Q$ at every
$q\in Q$ and by $\mathfrak{X}\left(Q\right)$ and $\Omega^{1}\left(Q\right)$
the sheaves of sections of $\tau$ and $\pi$, respectively. If $F:Q\rightarrow P$
is a smooth function between differentiable manifolds, we denote by
$F_{*}$ and $F^{*}$ the push-forward map and its transpose, respectively.

Consider a local chart $\left(U,\varphi\right)$ of $Q$, with $\varphi:U\rightarrow\mathbb{R}^{n}$.
Given $q\in U$, we write $\varphi\left(q\right)=\left(q^{1},\dots,q^{n}\right)=\mathbf{q}$.
For the induced local chart $\left(T^{\ast}U,\left(\varphi^{\ast}\right)^{-1}\right)$
on $T^{\ast}Q$, i.e. the canonical coordinates of the cotangent bundle,
we write, for all $\alpha\in T^{\ast}U$, 
\[
\begin{array}{l}
\left(\varphi^{\ast}\right)^{-1}\left(\alpha\right)=\left(q^{1},\dots,q^{n},p_{1},\dots,p_{n}\right)=\left(\mathbf{q},\mathbf{p}\right),\end{array}%\label{icor}
\]
or simply 
\[
\left(\varphi_{q}^{\ast}\right)^{-1}\left(\alpha\right)=\mathbf{p}.%\label{icor1}
\]

By a quadratic form on a subbundle $V\subseteq T^{*}Q$ we shall understand
a function $\mathfrak{H}:V\rightarrow\mathbb{R}$ given by the formula
\begin{equation}
\mathfrak{H}\left(\alpha\right)=\frac{1}{2}\left\langle \alpha,\rho^{\sharp}\left(\alpha\right)\right\rangle ,\;\;\;\forall\alpha\in V,\label{rmd}
\end{equation}
where $\rho$ is a fibered inner product on $V^{*}$, $\rho^{\flat}:V^{*}\rightarrow V$
is given by 
\[
\left\langle \rho^{\flat}\left(u\right),v\right\rangle =\rho\left(u,v\right),
\]
and $\rho^{\sharp}:V\rightarrow V^{*}$ is the inverse of $\rho^{\flat}$.
Given a subbundle $W\subseteq V$, by $W^{\bot}$ we shall denote
the orthogonal complement w.r.t. $\rho$ and by $W^{\sharp}$ the
subbundle $\rho^{\sharp}\left(W\right)\subseteq V^{*}$. It is easy
to see that 
\begin{equation}
W^{\bot}=\rho^{\sharp}\left(W^{\circ}\right)=\left(W^{\circ}\right)^{\sharp}.\label{rort}
\end{equation}
Here, by $W^{\circ}\subseteq V^{*}$ we are denoting the annihilator
of $W$. We shall also say that $W^{\bot}$ is the orthogonal complement
of $W$ with respect to $\mathfrak{H}$.

The vertical subbundle associated to any vector bundle is given by
the kernel of the push-forward of the bundle projection, and every
element (resp. smooth section) of this subbundle is called vertical
(resp. vertical vector field). In particular, for the cotangent bundle,
it is given by $\ker\pi_{*}\subset TT^{*}Q$. If $\alpha\in T_{q}^{*}Q$,
there is a canonical way to identify $\ker\pi_{*,\alpha}$ and $T_{q}^{*}Q$.
This can be done through the vertical lift map $\mathsf{vlift}_{\alpha}^{\pi}:T_{q}^{\ast}Q\rightarrow\ker\pi_{\ast,\alpha}$,
defined by 
\begin{equation}
\mathsf{vlift}_{\alpha}^{\pi}(\beta)=\left.\frac{\mathrm{d}}{\mathrm{d}t}(\alpha+t\beta)\right\vert _{t=0}.\label{veri}
\end{equation}
This map is in fact an isomorphism of vector spaces for every $\alpha\in T^{*}Q$.

Given a smooth function $f:T^{\ast}Q\rightarrow\mathbb{R}$, we define
the fiber derivative of $f$ as the vector bundle morphism $\mathbb{F}f:T^{\ast}Q\rightarrow TQ$
such that, for every $\alpha,\beta\in T_{q}^{\ast}Q$, 
\begin{equation}
\left\langle \beta,\mathbb{F}f(\alpha)\right\rangle =\left.\frac{\mathrm{d}}{\mathrm{d}t}f(\alpha+t\beta)\right\vert _{t=0}=\left\langle \mathrm{d}f\left(\alpha\right),\mathsf{vlift}_{\alpha}^{\pi}\left(\beta\right)\right\rangle .\label{fiberder}
\end{equation}
In canonical coordinates, 
\begin{equation}
\left\langle \mathrm{d}q^{i},\mathbb{F}f\left(p_{k}\,\mathrm{d}q^{k}\right)\right\rangle =\frac{\partial f\circ\left(\varphi^{\ast}\right)^{-1}}{\partial p_{i}}\left(\mathbf{q},\mathbf{p}\right).\label{lce}
\end{equation}
For a quadratic form $\mathfrak{H}:T^{*}Q\rightarrow\mathbb{R}$ {[}see
Eq. \eqref{rmd}{]}, it can be shown that 
\begin{equation}
\mathbb{F}\mathfrak{H}=\rho^{\sharp},\label{fhr}
\end{equation}
and consequently 
\begin{equation}
\mathfrak{H}\left(\alpha\right)=\frac{1}{2}\,\left\langle \alpha,\mathbb{F}\mathfrak{H}\left(\alpha\right)\right\rangle ,\;\;\;\forall\alpha\in T^{*}Q.\label{halfa}
\end{equation}

Denote by $\omega$ the canonical symplectic $2$-form on $T^{*}Q$.
Given two functions $f,g:T^{*}Q\rightarrow\mathbb{R}$, its canonical
Poisson bracket is the function 
\begin{equation}
\left\{ f,g\right\} :=\left\langle \mathrm{d}f,\omega^{\sharp}\left(\mathrm{d}g\right)\right\rangle ,\label{ccpb}
\end{equation}
where $\omega^{\sharp}$ is the inverse of $\omega^{\flat}:TQ\rightarrow T^{*}Q$
and the latter is given by the equation 
\[
\left\langle \omega^{\flat}\left(u\right),v\right\rangle =\omega\left(u,v\right).
\]
In canonical coordinates, omitting $\left(\varphi^{\ast}\right)^{-1}$
for simplicity, 
\begin{equation}
\left\{ f,g\right\} \left(\mathbf{q},\mathbf{p}\right)=\left(\frac{\partial f}{\partial q^{i}}\,\frac{\partial g}{\partial p_{i}}-\frac{\partial g}{\partial q^{i}}\,\frac{\partial f}{\partial p_{i}}\right)\left(\mathbf{q},\mathbf{p}\right).\label{cpb}
\end{equation}

Throughout all of the paper, we will use the following convention
for indices 
\[
\begin{cases}
\text{latin indices } & i,j,k,l=1,\dots,n;\\
\text{latin indices } & a,b,c,d=1,\dots,m;\\
\text{greek indices } & \mu,\nu,\sigma,\rho=1,\dots,n-m.
\end{cases}
\]

\section{Re-writing the matching conditions}

\label{rewr}

Suppose that we have an underactuated Hamiltonian system $(H,\mathcal{W})$
with 
\begin{equation}
H=\mathfrak{H}+h\circ\pi,\label{HHh}
\end{equation}
where $\mathfrak{H}:T^{*}Q\rightarrow\mathbb{R}$ is a quadratic form
{[}see Eq. \eqref{rmd}{]} and $h:Q\rightarrow\mathbb{R}$ is an arbitrary
smooth function. {[}In canonical coordinates, this means that $H$
is given by Eq. \eqref{H}{]}. In other words, we are assuming that
$H$ is \textit{simple}. Note that $\mathbb{F}H=\mathbb{F}\mathfrak{H}=\rho^{\sharp}$
{[}see Eqs. \eqref{fiberder} and \eqref{fhr}{]}. Suppose also\footnote{This will be the general setting and notation from now on.}
that there exists a subbundle $W$ of $T^{*}Q$ of rank $m$ such
that {[}see \eqref{veri}{]} 
\begin{equation}
\mathcal{W}_{\alpha}=\mathsf{vlift}_{\alpha}^{\pi}(W_{\pi\left(\alpha\right)}),\;\;\;\forall\alpha\in T^{*}Q.\label{WvW}
\end{equation}

\begin{remark}
\label{tri}Note that $(H,\mathcal{W})$ can be described by the triple
$(\mathfrak{H},h,W)$. 
\end{remark}
In such a case, according to Ref. \cite{lyap-shap}, the matching
conditions of the Chang's version of the energy shaping method, for
a simple unknown $\hat{H}=\hat{\mathfrak{H}}+\hat{h}\circ\pi$ {[}see
Eq. \eqref{V} for a local expression{]}, are given by 
\begin{equation}
\left\{ \hat{\mathfrak{H}},\mathfrak{H}\right\} \circ\mathbb{F}\hat{\mathfrak{H}}^{-1}\left(\mathfrak{\mathsf{v}}\right)=0,\;\;\;\forall\mathfrak{\mathsf{v}}\in W^{\circ},\label{keg}
\end{equation}
the \textbf{kinetic equation}, and 
\begin{equation}
\left(\mathrm{d}\hat{h}\circ\mathbb{F}\mathfrak{H}-\mathrm{d}h\circ\mathbb{F}\hat{\mathfrak{H}}\right)\circ\mathbb{F}\hat{\mathfrak{H}}^{-1}\left(\mathsf{v}\right)=0,\;\;\;\forall\mathfrak{\mathsf{v}}\in W^{\circ},\label{peg}
\end{equation}
the \textbf{potential equation} {[}see Eqs. $\left(73\right)$ and
$\left(74\right)$ and Remark $18$ of Ref. \cite{lyap-shap}{]}.
\begin{remark}
In order to compare \eqref{keg} and \eqref{peg} with Eqs. $\left(73\right)$
and $\left(74\right)$ of Ref. \cite{lyap-shap}, we must use that
$\mathbb{F}\hat{\mathfrak{H}}\left(W^{\circ}\right)=W^{\bot}$ {[}see
Eqs. \eqref{rort} and \eqref{fhr}{]}.
\end{remark}
Here $\left\{ \cdot,\cdot\right\} $ denotes the canonical Poisson
bracket on $T^{*}Q$ {[}see Eq. \eqref{ccpb}{]}. Equations above
are the intrinsic counterpart of Eqs. \eqref{kec} and \eqref{pec}.
Their data are given by the triple $\left(\mathfrak{H},h,W\right)$,
and their unknown by the pair $(\hat{\mathfrak{H}},\hat{h})$. In
the following, we are going to re-write \eqref{keg} and \eqref{peg}
by redefining the unknown.

\subsection{Intrinsic version}
\begin{lemma}
Given a subbundle $W\subseteq T^{*}Q$, the set of quadratic forms
on $T^{*}Q$ are in bijection with the triples $(\hat{W},\mathfrak{K},\mathfrak{L})$,
where $\hat{W}$ is a complement of $W$, $\mathfrak{K}$ is a quadratic
form on $\hat{W}$, and $\mathfrak{L}$ is a quadratic form on $W$. 
\end{lemma}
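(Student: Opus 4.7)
The plan is to exhibit explicit forward and inverse maps between the two sets, and then check that each composition is the identity.

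For the forward map, given a quadratic form $\hat{\mathfrak{H}}$ on $T^{*}Q$, let $\hat{\rho}$ be the associated fibered inner product on $TQ$, so that $\mathbb{F}\hat{\mathfrak{H}}=\hat{\rho}^{\sharp}$ by \eqref{fhr}. I would assign to $\hat{\mathfrak{H}}$ the triple $\bigl(W^{\bot},\,\hat{\mathfrak{H}}|_{W^{\bot}},\,\hat{\mathfrak{H}}|_{W}\bigr)$, where $W^{\bot}$ is the orthogonal complement provided by \eqref{rort}. Positive-definiteness of $\hat{\rho}$ guarantees $T^{*}Q=W\oplus W^{\bot}$ fiberwise, so $W^{\bot}$ really is a complement of $W$; the two restrictions are manifestly quadratic forms on the respective subbundles.

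For the inverse map, given a triple $(\hat{W},\mathfrak{K},\mathfrak{L})$, I would exploit the splitting $T^{*}Q=W\oplus\hat{W}$ to decompose each $\alpha$ uniquely as $\alpha=\alpha_{W}+\alpha_{\hat{W}}$ and define
\[
\hat{\mathfrak{H}}(\alpha):=\mathfrak{L}(\alpha_{W})+\mathfrak{K}(\alpha_{\hat{W}}).
\]
To see this fits the definition \eqref{rmd}, I would polarize $\mathfrak{L}$ and $\mathfrak{K}$, assemble them by declaring $W\perp\hat{W}$ into a symmetric non-degenerate bilinear form $B$ on $T^{*}Q$, and dualize $B$ to a fibered inner product on $TQ$; by construction $\hat{\mathfrak{H}}(\alpha)=\tfrac{1}{2}B(\alpha,\alpha)$, matching \eqref{halfa}.

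Checking that the two assignments are mutually inverse is then a short calculation. Starting from $\hat{\mathfrak{H}}$, passing to the triple and reconstructing yields $\alpha\mapsto\hat{\mathfrak{H}}(\alpha_{W})+\hat{\mathfrak{H}}(\alpha_{W^{\bot}})$, which equals $\hat{\mathfrak{H}}(\alpha)$ because the cross term in the polarization vanishes by $W\perp W^{\bot}$. In the opposite direction, the form $\hat{\mathfrak{H}}$ constructed from a triple makes $\hat{W}$ orthogonal to $W$ by design, so $\hat{W}\subseteq W^{\bot}$; the dimension count $\dim\hat{W}=\dim T^{*}Q-\dim W=\dim W^{\bot}$ upgrades this inclusion to equality, while the restrictions trivially reproduce $\mathfrak{K}$ and $\mathfrak{L}$.

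The only real subtlety I expect to face is ensuring the bilinear form $B$ built from a triple is non-degenerate, so that it genuinely arises from an inner product on $TQ$ in the sense of \eqref{rmd}. This reduces to non-degeneracy of each of $\mathfrak{K}$ and $\mathfrak{L}$ on its own subbundle, which is automatic in the positive-definite setting natural to energy shaping; more generally one simply builds this non-degeneracy into the definition of quadratic form on $W$ and $\hat{W}$.
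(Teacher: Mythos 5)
Your proposal is correct and follows essentially the same route as the paper: the forward map $\hat{\mathfrak{H}}\mapsto\bigl(W^{\bot},\hat{\mathfrak{H}}|_{W^{\bot}},\hat{\mathfrak{H}}|_{W}\bigr)$ and the inverse $\hat{\mathfrak{H}}:=\mathfrak{K}\circ\hat{\mathfrak{p}}+\mathfrak{L}\circ\mathfrak{p}$ are exactly the maps in Eqs.~\eqref{udl} and \eqref{v}. You merely spell out the mutual-inverse check and the non-degeneracy point that the paper dismisses as clear, which is fine.
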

\begin{proof}
To any quadratic form $\hat{\mathfrak{H}}:T^{*}Q\rightarrow\mathbb{R}$,
we can assign a triple $(\hat{W},\mathfrak{K},\mathfrak{L})$ with
\begin{equation}
\hat{W}:= W^{\bot},\;\;\;\mathfrak{K}:=\left.\hat{\mathfrak{H}}\right|_{\hat{W}}\;\;\;\textrm{and}\;\;\;\mathfrak{L}:=\left.\hat{\mathfrak{H}}\right|_{W}.\label{udl}
\end{equation}
Here, $W^{\bot}$ means the orthogonal of $W$ w.r.t. $\hat{\mathfrak{H}}$.
Reciprocally, to any triple $(\hat{W},\mathfrak{K},\mathfrak{L})$
as described in the lemma, we can assign the quadratic form 
\begin{equation}
\hat{\mathfrak{H}}:=\mathfrak{K}\circ\hat{\mathfrak{p}}+\mathfrak{L}\circ\mathfrak{p},\label{v}
\end{equation}
where 
\begin{equation}
\hat{\mathfrak{p}}:T^{*}Q\rightarrow\hat{W}\;\;\;\textrm{and}\;\;\;\mathfrak{p}:T^{*}Q\rightarrow W\label{p}
\end{equation}
are the projections related to the decomposition $T^{*}Q=\hat{W}\oplus W$.
It is clear that the map defined by Eq. \eqref{udl} is the inverse
of the map given by \eqref{v}. 
\end{proof}
\begin{remark}
If $\hat{\mathfrak{H}}$ and $(\hat{W},\mathfrak{K},\mathfrak{L})$
are related as in the previous proof, then $\hat{W}$ is always the
orthogonal complement of $W$ with respect to $\hat{\mathfrak{H}}$. 
\end{remark}
\begin{proposition}
Fix a subbundle $W\subseteq T^{*}Q$ and a quadratic form $\mathfrak{H}:T^{*}Q\rightarrow\mathbb{R}$.
If a second quadratic form $\hat{\mathfrak{H}}:T^{*}Q\rightarrow\mathbb{R}$
is a solution of \eqref{keg}, then $\mathfrak{K}:=\left.\hat{\mathfrak{H}}\right|_{\hat{W}}$
is a solution of {[}see Eq. \eqref{p}{]} 
\begin{equation}
\left\{ \mathfrak{K}\circ\hat{\mathfrak{p}},\mathfrak{H}\right\} \left(\mathfrak{\sigma}\right)=0,\;\;\;\forall\sigma\in\hat{W},\label{nkeg}
\end{equation}
where $\hat{W}:= W^{\bot}$ (the orthogonal complement of $W$
w.r.t. $\hat{\mathfrak{H}}$). On the other hand, given a complement
$\hat{W}$ of $W$ and its related projections $\hat{\mathfrak{p}}$
and $\mathfrak{p}$ {[}see Eq. \eqref{p} again{]}, if a quadratic
form $\mathfrak{K}:\hat{W}\rightarrow\mathbb{R}$ satisfies \eqref{nkeg}
then, for every quadratic form $\mathfrak{L}:W\rightarrow\mathbb{R}$,
the function $\hat{\mathfrak{H}}$ given by \eqref{v} satisfies \eqref{keg}. 
\end{proposition}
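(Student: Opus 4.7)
The plan is to recast \eqref{keg} as an equation on the subbundle $\hat{W}$ rather than on $W^{\circ}$, then to decompose $\hat{\mathfrak{H}}$ according to the previous lemma as $\hat{\mathfrak{H}}=\mathfrak{K}\circ\hat{\mathfrak{p}}+\mathfrak{L}\circ\mathfrak{p}$, and finally to show that the $\mathfrak{L}$-summand contributes trivially to the Poisson bracket when evaluated on $\hat{W}$. For the reformulation, \eqref{rort} together with \eqref{fhr} give $\mathbb{F}\hat{\mathfrak{H}}(W^{\circ})=W^{\bot}=\hat{W}$, so \eqref{keg} is equivalent to
\[
\{\hat{\mathfrak{H}},\mathfrak{H}\}(\sigma)=0,\qquad\forall\sigma\in\hat{W}.
\]
Expanding with bilinearity of the canonical Poisson bracket, $\{\hat{\mathfrak{H}},\mathfrak{H}\}=\{\mathfrak{K}\circ\hat{\mathfrak{p}},\mathfrak{H}\}+\{\mathfrak{L}\circ\mathfrak{p},\mathfrak{H}\}$, so both implications of the proposition reduce to the single claim
\[
\{\mathfrak{L}\circ\mathfrak{p},\mathfrak{H}\}(\sigma)=0,\qquad\forall\sigma\in\hat{W},
\]
for every quadratic form $\mathfrak{L}$ on $W$.

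The main (and essentially only) obstacle is to verify this vanishing. My strategy is to prove the slightly stronger statement that $\mathrm{d}(\mathfrak{L}\circ\mathfrak{p})$ vanishes identically on $\hat{W}$; this is enough because, by \eqref{ccpb}, the Poisson bracket at a point depends only on the differentials of its arguments at that point. Two ingredients are at play. First, $\mathfrak{p}\vert_{\hat{W}}=0$: the projection onto $W$ along $\hat{W}$ sends every $\sigma\in\hat{W}$ to the zero of the corresponding fiber of $W$, straight from the definition of $(\mathfrak{p},\hat{\mathfrak{p}})$ in \eqref{p}. Second, a quadratic form on a vector bundle has vanishing differential along its zero section: it is identically zero on the zero section, and its fibrewise derivative, being $\rho^{\sharp}$ by \eqref{fhr} and hence linear in the fiber, vanishes at the origin of each fiber. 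The chain rule $\mathrm{d}(\mathfrak{L}\circ\mathfrak{p})(\sigma)=\mathrm{d}\mathfrak{L}\bigl(0_{\pi(\sigma)}\bigr)\circ\mathrm{d}\mathfrak{p}(\sigma)=0$ then closes the argument. A coordinate check in the canonical chart would give the same thing: writing $\mathfrak{L}\circ\mathfrak{p}=\tfrac{1}{2}\tilde{L}^{ij}(\mathbf{q})\,p_{i}p_{j}$ with $\tilde{L}^{ij}(\mathbf{q})$ built from a local basis of $W$, every summand of $\partial_{q^{k}}(\mathfrak{L}\circ\mathfrak{p})$ and $\partial_{p_{k}}(\mathfrak{L}\circ\mathfrak{p})$ carries an explicit factor that coincides with a $W$-coordinate of $\alpha$, which vanishes on $\hat{W}$.

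With the vanishing in hand, the forward direction of the proposition is immediate: from a solution $\hat{\mathfrak{H}}$ of \eqref{keg}, setting $\mathfrak{K}:=\hat{\mathfrak{H}}\vert_{\hat{W}}$ and combining the reformulation on $\hat{W}$ with the decomposition yields \eqref{nkeg}. For the converse, given any complement $\hat{W}$ of $W$, any quadratic form $\mathfrak{K}$ on $\hat{W}$ satisfying \eqref{nkeg}, and any quadratic form $\mathfrak{L}$ on $W$, the preceding lemma together with the remark following it guarantees that the $\hat{\mathfrak{H}}$ built by \eqref{v} has $\hat{W}$ as its $\hat{\mathfrak{H}}$-orthogonal to $W$; the reformulation of \eqref{keg} then applies, and the vanishing above combined with bilinearity produces $\{\hat{\mathfrak{H}},\mathfrak{H}\}(\sigma)=0$ for $\sigma\in\hat{W}$, i.e.\ \eqref{keg}.
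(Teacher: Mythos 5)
Your proposal is correct, and it shares the paper's overall architecture: both reformulate \eqref{keg} as $\{\hat{\mathfrak{H}},\mathfrak{H}\}(\sigma)=0$ for $\sigma\in\hat{W}$ via $\mathbb{F}\hat{\mathfrak{H}}^{-1}(W^{\circ})=W^{\bot}$, decompose $\hat{\mathfrak{H}}=\mathfrak{K}\circ\hat{\mathfrak{p}}+\mathfrak{L}\circ\mathfrak{p}$, and reduce everything to the vanishing of $\{\mathfrak{L}\circ\mathfrak{p},\mathfrak{H}\}$ on $\hat{W}$. Where you genuinely diverge is in how that vanishing is established. The paper writes $\mathfrak{L}\circ\mathfrak{p}$ in a coordinate chart with a local basis of $W$, expands the bracket via \eqref{cpb}, and observes that every resulting term carries a factor $p_{k}\mathbb{P}^{ka}$, which vanishes on $\hat{W}$. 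You instead prove the stronger, intrinsic statement that $\mathrm{d}(\mathfrak{L}\circ\mathfrak{p})$ itself vanishes at every point of $\hat{W}=\ker\mathfrak{p}$, by combining $\mathfrak{p}\vert_{\hat{W}}=0$ with the fact that a fiberwise quadratic form has vanishing differential along the zero section, and then invoking that the bracket \eqref{ccpb} is pointwise tensorial in the differentials. This buys two things: it is coordinate-free, and it shows $\{\mathfrak{L}\circ\mathfrak{p},g\}$ vanishes on $\hat{W}$ for an \emph{arbitrary} function $g$, not just the quadratic $\mathfrak{H}$; the paper's computation, by contrast, makes the mechanism ($p_{k}\mathbb{P}^{ka}=0$) concretely visible in the same coordinates used throughout the rest of the paper. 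You are also slightly more careful than the paper in closing the converse direction, noting explicitly that the remark after the lemma identifies the given complement $\hat{W}$ with the $\hat{\mathfrak{H}}$-orthogonal of $W$, which is what legitimizes applying the reformulation of \eqref{keg} to the reconstructed $\hat{\mathfrak{H}}$ of \eqref{v}.
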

\begin{proof}
Given both a quadratic form $\hat{\mathfrak{H}}$ and a triple $(\hat{W},\mathfrak{K},\mathfrak{L})$
related as in the previous lemma, let us show that, for all $\sigma\in\hat{W}$,
\begin{equation}
\left\{ \mathfrak{L}\circ\mathfrak{p},\mathfrak{H}\right\} \left(\sigma\right)=0.\label{null}
\end{equation}
To do that, fix a coordinate chart $\left(U,\left(q^{1},\dots,q^{n}\right)\right)$
of $Q$ and a local basis $\left\{ \xi_{1},\dots,\xi_{m}\right\} \subseteq\Omega^{1}\left(U\right)$
of the subbundle $W$. Define 
\begin{equation}
\mathbb{H}^{ij}\left(q\right):=\left\langle \left.\mathrm{d}q^{i}\right|_{q},\mathbb{F}\mathfrak{H}\left(\left.\mathrm{d}q^{j}\right|_{q}\right)\right\rangle ,\quad q\in U,\label{hfh}
\end{equation}
and write 
\[
\mathfrak{p}\left(\left.\mathrm{d}q^{k}\right|_{q}\right)=\mathbb{P}^{ka}\left(q\right)\xi_{a}\left(q\right).
\]
Note that, using \eqref{halfa}, 
\begin{equation}
\mathfrak{H}\left(p_{k}\left.\mathrm{d}q^{k}\right|_{q}\right)=\frac{1}{2}\left\langle p_{k}\left.\mathrm{d}q^{k}\right|_{q},\mathbb{F}\mathfrak{H}\left(p_{l}\left.\mathrm{d}q^{l}\right|_{q}\right)\right\rangle =\frac{1}{2}\,p_{k}p_{l}\,\mathbb{H}^{kl}\left(q\right).\label{hpdq}
\end{equation}
On the other hand, if $\lambda$ is the fibered inner product defining
$\mathfrak{L}$, consider the matrix 
\[
\mathbb{L}_{ab}\left(q\right):=\left\langle \xi_{a}\left(q\right),\lambda^{\sharp}\left(\xi_{b}\left(q\right)\right)\right\rangle ,\;\;\;q\in U.
\]
Then, omitting the dependence on $q$, just for simplicity, we have
that 
\[
\mathfrak{L}\circ\mathfrak{p}\left(p_{k}\mathrm{d}q^{k}\right)=\frac{1}{2}\left\langle \mathfrak{p}\left(p_{k}\mathrm{d}q^{k}\right),\lambda^{\sharp}\left(\mathfrak{p}\left(p_{l}\mathrm{d}q^{l}\right)\right)\right\rangle =\frac{1}{2}\,p_{k}p_{l}\,\mathbb{P}^{ka}\mathbb{P}^{lb}\mathbb{L}_{ab},
\]
and consequently {[}using \eqref{cpb}{]} 
\[
\left\{ \mathfrak{L}\circ\mathfrak{p},\mathfrak{H}\right\} \left(p_{k}\mathrm{d}q^{k}\right)=\left(\frac{\partial}{\partial q^{k}}\left(\mathbb{P}^{ia}\mathbb{P}^{jb}\mathbb{L}_{ab}\right)\mathbb{H}^{kl}-\frac{\partial\mathbb{H}^{ij}}{\partial q^{k}}\mathbb{P}^{ka}\mathbb{P}^{lb}\mathbb{L}_{ab}\right)p_{i}p_{j}p_{l}.
\]
In addition, since $p_{k}\mathrm{d}q^{k}\in\hat{W}$ if and only if
\[
0=\mathfrak{p}\left(p_{k}\mathrm{d}q^{k}\right)=p_{k}\mathbb{P}^{ka}\xi_{a},
\]
which in turn is equivalent to $p_{k}\mathbb{P}^{ka}=0$ for all $a$,
the Eq. \eqref{null} is immediate. To end the proof, it is enough
to note that \eqref{v} and \eqref{null} imply the equality 
\[
\left\{ \hat{\mathfrak{H}},\mathfrak{H}\right\} \left(\sigma\right)=\left\{ \mathfrak{K}\circ\hat{\mathfrak{p}},\mathfrak{H}\right\} \left(\sigma\right),\;\;\;\forall\sigma\in\hat{W},
\]
from which the proposition easily follows. 
\end{proof}
\bigskip{}

So, we can replace the kinetic equation by Eq. \eqref{nkeg}, whose
unknown is a pair $(\mathfrak{K},\hat{W})$: $\hat{W}$ is a complement
of $W$ and $\mathfrak{K}:\hat{W}\rightarrow\mathbb{R}$ is a quadratic
form.

\bigskip{}

Now, let us study the potential equation. Given $\beta\in T^{*}Q$
and $\sigma\in\hat{W}$, on the same fiber, it follows that 
\begin{align*}
\left\langle \beta,\mathbb{F}\left(\mathfrak{L}\circ\mathfrak{p}\right)\left(\sigma\right)\right\rangle  & =\left.\frac{\mathrm{d}}{\mathrm{d}t}\mathfrak{L}\circ\mathfrak{p}(\sigma+t\beta)\right\vert _{t=0}=\left.\frac{\mathrm{d}}{\mathrm{d}t}\mathfrak{L}\circ\mathfrak{p}(t\beta)\right\vert _{t=0}\\
 & =\mathfrak{L}\circ\mathfrak{p}(\beta)\,\left.\frac{\mathrm{d}}{\mathrm{d}t}t^{2}\right\vert _{t=0}=0,
\end{align*}
because $\mathfrak{p}\left(\sigma\right)=0$ and $\mathfrak{L}$ is
a quadratic form. So, $\mathbb{F}\left(\mathfrak{L}\circ\mathfrak{p}\right)\left(\sigma\right)=0$
for all $\sigma\in\hat{W}$, and the potential equation \eqref{peg}
can be written 
\begin{equation}
\left(\mathrm{d}\hat{h}\circ\mathbb{F}\mathfrak{H}-\mathrm{d}h\circ\mathbb{F}\left(\mathfrak{K}\circ\hat{\mathfrak{p}}\right)\right)\left(\sigma\right)=0,\;\;\;\forall\sigma\in\hat{W},\label{epot}
\end{equation}
where only $\hat{W}$ and $\mathfrak{K}$ are involved (the quadratic
form $\mathfrak{L}$ plays no role in either of the two matching conditions).
As a consequence, instead of \eqref{keg} and \eqref{peg}, we can
think of the matching conditions as the Eqs. \eqref{nkeg} and \eqref{epot}
for the unknown $(\mathfrak{K},\hat{h},\hat{W})$, and we shall do
it from now on.

\subsection{Local expressions}

For reasons that will be clear later, we shall concentrate on those
solutions $(\mathfrak{K},\hat{W})$ of the kinetic equations for which
$\hat{W}^{\sharp}=\mathbb{F}\mathfrak{H}(\hat{W})$ is integrable.
The fact that this is alway possible, unless locally, is proved in
the next lemma. 
\begin{lemma}
\label{int}Given a subbundle $W\subseteq T^{*}Q$ of rank $m$ and
a quadratic form $\mathfrak{H}:T^{*}Q\rightarrow\mathbb{R}$, around
every point $q_{0}\in Q$ we can construct, by using only algebraic
manipulations, a local coordinate chart $\left(U,\left(q^{1},\dots,q^{n}\right)\right)$
such that 
\begin{equation}
\hat{W}:=\mathbb{F}\mathfrak{H}^{-1}\left(\mathsf{span}\left\{ \left.\partial\right/\partial q^{1},\dots,\left.\partial\right/\partial q^{n-m}\right\} \right)\label{what}
\end{equation}
is a complement of $W$ along $U$. In particular, $\hat{W}^{\sharp}$
is an integrable subbundle. 
\end{lemma}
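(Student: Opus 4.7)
The plan is to reduce the lemma to a purely linear-algebraic statement about transversality at a single point and then extend by continuity. Since $\mathbb{F}\mathfrak{H}=\rho^{\sharp}$ is a vector bundle isomorphism, applying $\rho^{\sharp}$ to both summands shows that $\hat{W}$ defined by \eqref{what} is a complement of $W$ on $U$ if and only if
\begin{equation*}
TQ|_{U}=\mathsf{span}\!\left\{\partial/\partial q^{1},\dots,\partial/\partial q^{n-m}\right\}\oplus W^{\sharp}|_{U},
\end{equation*}
where $W^{\sharp}=\rho^{\sharp}(W)$. So it is enough to produce a chart in which this transversality holds.

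First I would fix an arbitrary smooth chart $(V,(x^{1},\dots,x^{n}))$ around $q_{0}$ together with local generators $\xi_{1},\dots,\xi_{m}$ of $W$ on $V$; applying $\rho^{\sharp}$ yields local generators $\eta_{a}:=\rho^{\sharp}(\xi_{a})$ of $W^{\sharp}$. At $q_{0}$ the vectors $\eta_{1}(q_{0}),\dots,\eta_{m}(q_{0})$ span $W^{\sharp}_{q_{0}}$. By elementary linear algebra I can complete them to a basis of $T_{q_{0}}Q$ using $n-m$ of the $\partial/\partial x^{i}|_{q_{0}}$, and encode the resulting change of basis in an invertible constant matrix $A=(A^{i}_{j})$. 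Defining new coordinates by $q^{i}:=A^{i}_{j}\,x^{j}$, the last $m$ coordinate vectors at $q_{0}$ span $W^{\sharp}_{q_{0}}$. This step is purely algebraic (no ODE is solved), which is what the lemma means by ``only algebraic manipulations''.

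To conclude, I would invoke continuity: the determinant of the matrix whose columns are the components of $\partial/\partial q^{1},\dots,\partial/\partial q^{n-m},\eta_{1},\dots,\eta_{m}$ is nonzero at $q_{0}$, hence nonzero on an open neighborhood $U\subseteq V$, so the direct-sum decomposition displayed above holds throughout $U$. Applying $\rho^{\flat}$ then yields $T^{*}Q|_{U}=\hat{W}\oplus W$, as required. Finally, $\hat{W}^{\sharp}$ is spanned by the pairwise commuting coordinate vector fields $\partial/\partial q^{1},\dots,\partial/\partial q^{n-m}$, and is therefore involutive with integral leaves given by the level sets of $(q^{n-m+1},\dots,q^{n})$. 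I do not expect any real obstacle; the only thing to check carefully is that pointwise transversality at $q_{0}$ really does propagate to a full neighborhood, which is immediate from continuity of the determinant.
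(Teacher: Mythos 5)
Your proof is correct and follows essentially the same route as the paper's: pass to $W^{\sharp}=\mathbb{F}\mathfrak{H}(W)$, use pointwise linear algebra at $q_{0}$ to make $n-m$ coordinate vector fields a complement of $W^{\sharp}_{q_{0}}$ (the paper merely permutes the original coordinates via a nonsingular $m\times m$ minor, where you allow a general constant linear change of coordinates), propagate to a neighborhood by continuity of a determinant, and pull back through $\mathbb{F}\mathfrak{H}^{-1}$. The integrability remark via commuting coordinate fields is likewise the intended one.
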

\begin{proof}
Given $q_{0}\in Q$, consider a coordinate neighborhood $\left(V,\left(q^{1},\dots,q^{n}\right)\right)$
and a local basis $\left\{ X_{1},\dots,X_{m}\right\} \subseteq\mathcal{\mathfrak{X}}\left(V\right)$
of $W^{\sharp}$ around $q_{0}$. It is clear that 
\[
X_{i}\left(q\right)=\sum_{j=1}^{n}C_{ij}\left(q\right)\left.\frac{\partial}{\partial q^{j}}\right|_{q},\;\;\;q\in V,
\]
being $C_{ij}\left(q\right)$ the coefficients of an $m\times n$
matrix $\mathbb{C}\left(q\right)$ of maximal rank. Reordering the
coordinate functions $q^{i}$'s, if necessary, we can ensure that
the $m\times m$ sub-matrix $\mathbb{S}\left(q_{0}\right)$, given
by the last $m$ columns of $\mathbb{C}\left(q_{0}\right)$, is non-singular.
Then, the vectors 
\[
\left\{ \left.\frac{\partial}{\partial q^{1}}\right|_{q_{0}},\dots,\left.\frac{\partial}{\partial q^{n-m}}\right|_{q_{0}}\right\} 
\]
define a complement of $W_{q_{0}}^{\sharp}$ and, by continuity, the
first $n-m$ coordinate vector fields $\left\{ \left.\partial\right/\partial q^{1},\dots,\left.\partial\right/\partial q^{n-m}\right\} $
span a complement of $W^{\sharp}$ along the open neighborhood $U\subseteq V$
of $q_{0}$ where $\mathbb{S}\left(q\right)$ is non-singular. As
a consequence, the subbundle $\hat{W}$ given by \eqref{what} is
a complement of $W$ along $U$. 
\end{proof}
%\label{5}

\begin{remark}
By Frobenius theorem, if $\hat{W}^{\sharp}$ is an integrable subbundle,
then $\hat{W}$ is locally given by \eqref{what} for some coordinate
chart. 
\end{remark}
Now, let us fix a complement $\hat{W}$ of $W$ such that $\hat{W}^{\sharp}=\mathbb{F}\mathfrak{H}(\hat{W})$
is integrable and, given $q_{0}\in Q$, consider a coordinate chart
$\left(U,\varphi=\left(q^{1},\dots,q^{n}\right)\right)$ around $q_{0}$
where $\hat{W}$ is locally given by \eqref{what} (as in the last
lemma). We want to find the form that the matching conditions adopt
in such coordinates. Consider the matrix with entries $\mathbb{H}^{ij}$
given by \eqref{hfh} and define 
\begin{equation}
\mathbb{H}_{ij}:=\left\langle \mathbb{F}\mathfrak{H}^{-1}\left(\frac{\partial}{\partial q^{i}}\right),\frac{\partial}{\partial q^{j}}\right\rangle .\label{h-1}
\end{equation}
We are omitting the dependence on $q$, just for simplicity. Clearly,
\begin{equation}
\mathbb{H}^{ij}\mathbb{H}_{jk}=\delta_{k}^{i}.\label{hh-1}
\end{equation}
Also, the co-vectors 
\[
\sigma_{\mu}:=\mathbb{F}\mathfrak{H}^{-1}\left(\frac{\partial}{\partial q^{\mu}}\right)=\mathbb{H}_{\mu k}\mathrm{d}q^{k}\in\hat{W},\;\;\;\mu=1,\dots,n-m,
\]
give a basis for $\hat{W}$ and we can write 
\begin{equation}
\hat{\mathfrak{p}}\left(\mathrm{d}q^{k}\right)=\hat{\mathbb{P}}^{k\mu}\sigma_{\mu}.\label{pkmu}
\end{equation}
Note that, since $\hat{\mathfrak{p}}\left(\sigma_{\mu}\right)=\sigma_{\mu}$,
\begin{equation}
\mathbb{H}_{\tau k}\hat{\mathbb{P}}^{k\mu}=\delta_{\tau}^{\mu}.\label{phd}
\end{equation}

\begin{remark}
If the (local) forms $\xi_{a}:=\vartheta_{ai}\,\mathrm{d}q^{i}$,
for $a=1,\dots,m$, give a (local) basis for $W$, since $\ker\hat{\mathfrak{p}}=W$,
we also have the identity 
\begin{equation}
\vartheta_{ai}\,\hat{\mathbb{P}}^{i\mu}=0.\label{wp0}
\end{equation}
As a consequence, the matrix $\hat{\mathbb{P}}$ is univocally determined
by the \eqref{phd} and \eqref{wp0}. 
\end{remark}
On the other hand, if $\kappa$ denotes the fibered inner product
on $\hat{W}$ defining $\mathfrak{K}$, consider the matrix 
\[
\mathbb{K}_{\mu\nu}:=\left\langle \sigma_{\mu},\kappa^{\sharp}\left(\sigma_{\nu}\right)\right\rangle .
\]
With this notation, we have that 
\begin{equation}
\mathfrak{K}\circ\hat{\mathfrak{p}}\left(p_{k}\mathrm{d}q^{k}\right)=\frac{1}{2}\left\langle \hat{\mathfrak{p}}\left(p_{k}\mathrm{d}q^{k}\right),\kappa^{\sharp}\left(\hat{\mathfrak{p}}\left(p_{l}\mathrm{d}q^{l}\right)\right)\right\rangle =\frac{1}{2}p_{k}p_{l}\hat{\mathbb{P}}^{k\mu}\hat{\mathbb{P}}^{l\nu}\mathbb{K}_{\mu\nu}.\label{dp}
\end{equation}
Thus, using \eqref{cpb}, \eqref{hpdq} and \eqref{dp}, the Eq. \eqref{nkeg}
translates to 
\[
\left(\frac{\partial}{\partial q^{k}}\left(\hat{\mathbb{P}}^{i\mu}\hat{\mathbb{P}}^{j\nu}\mathbb{K}_{\mu\nu}\right)\mathbb{H}^{kl}-\frac{\partial\mathbb{H}^{ij}}{\partial q^{k}}\hat{\mathbb{P}}^{k\mu}\hat{\mathbb{P}}^{l\nu}\mathbb{K}_{\mu\nu}\right)p_{i}p_{j}p_{l}=0.
\]
In addition, since a generic element of $\hat{W}$ has the form $a^{\tau}\sigma_{\tau}=a^{\tau}\mathbb{H}_{\tau k}\mathrm{d}q^{k}$,
i.e. $p_{k}=a^{\tau}\mathbb{H}_{\tau k}$, using the identities \eqref{hh-1}
and \eqref{phd} we have that the kinetic equation reads 
\begin{equation}
\left(\frac{\partial\mathbb{K}_{\tau_{2}\tau_{3}}}{\partial q^{\tau_{1}}}+\mathbb{G}_{\tau_{1}\tau_{2}\tau_{3}}^{\mu\nu}\mathbb{K}_{\mu\nu}\right)a^{\tau_{1}}a^{\tau_{2}}a^{\tau_{3}}=0,\label{lke}
\end{equation}
with 
\begin{equation}
\mathbb{G}_{\tau_{1}\tau_{2}\tau_{3}}^{\mu\nu}:=\hat{\mathbb{P}}^{k\mu}\delta_{\tau_{1}}^{\nu}\,\frac{\partial\mathbb{H}_{\tau_{2}\tau_{3}}}{\partial q^{k}}+\frac{\partial\left(\hat{\mathbb{P}}^{i\mu}\hat{\mathbb{P}}^{j\nu}\right)}{\partial q^{\tau_{1}}}\,\mathbb{H}_{\tau_{2}i}\mathbb{H}_{\tau_{3}j}.\label{G}
\end{equation}

Now, let us study the potential equation in the above coordinates.
From \eqref{hfh} we know that $\mathbb{F}\mathfrak{H}\left(p_{k}\mathrm{d}q^{k}\right)=p_{k}\mathbb{H}^{kl}\frac{\partial}{\partial q^{l}}$
and, if $p_{k}=a^{\tau}\,\mathbb{H}_{\tau k}$ {[}see \eqref{hh-1}{]},
\begin{equation}
\mathbb{F}\mathfrak{H}\left(a^{\tau}\,\mathbb{H}_{\tau k}\,\mathrm{d}q^{k}\right)=a^{\tau}\frac{\partial}{\partial q^{\tau}}.\label{fha}
\end{equation}
On the other hand, using Eqs. \eqref{lce} and \eqref{dp} we have
that $\mathbb{F}\left(\mathfrak{K}\circ\hat{\mathfrak{p}}\right)\left(p_{k}\,\mathrm{d}q^{k}\right)=p_{k}\,\hat{\mathbb{P}}^{k\mu}\,\hat{\mathbb{P}}^{l\nu}\,\mathbb{K}_{\mu\nu}\,\frac{\partial}{\partial q^{l}}$
and, again, if $p_{k}=a^{\tau}\,\mathbb{H}_{\tau k}$, 
\begin{equation}
\mathbb{F}\left(\mathfrak{K}\circ\hat{\mathfrak{p}}\right)\left(a^{\tau}\,\mathbb{H}_{\tau k}\,\mathrm{d}q^{k}\right)=a^{\mu}\,\hat{\mathbb{P}}^{l\tau}\,\mathbb{K}_{\tau\mu}\,\frac{\partial}{\partial q^{l}}.\label{fdp}
\end{equation}
Thus, from \eqref{fha} and \eqref{fdp}, the potential equation \eqref{epot}
reads 
\[
\left(\frac{\partial\hat{h}}{\partial q^{\mu}}-\frac{\partial h}{\partial q^{k}}\,\hat{\mathbb{P}}^{k\tau}\,\mathbb{K}_{\tau\mu}\right)\,a^{\mu}=0,
\]
which is equivalent to 
\begin{equation}
\frac{\partial\hat{h}}{\partial q^{\mu}}-\frac{\partial h}{\partial q^{k}}\,\hat{\mathbb{P}}^{k\tau}\,\mathbb{K}_{\tau\mu}=0,\;\;\;\mu=1,\dots,n-m.\label{lpe}
\end{equation}
{[}For simplicity, we are identifying $h$ (resp. $\hat{h}$) with
its local representative $h\circ\varphi^{-1}$ (resp. $\hat{h}\circ\varphi^{-1}$){]}.

Summarizing the results of the entire section, we have the next two
theorems. 
\begin{theorem}
\label{1}Consider an underactuated Hamiltonian system satisfying
\eqref{HHh} and \eqref{WvW}, i.e. one defined by a triple $\left(\mathfrak{H},h,W\right)$
(see Remark \ref{tri}). Then, every (simple) solution $\hat{H}=\hat{\mathfrak{H}}+\hat{h}\circ\pi$
of the matching conditions \eqref{keg} and \eqref{peg} is univocally
described by: $\left(\mathbf{i}\right)$ a subbundle $\hat{W}$ complementary
to $W$, $\left(\mathbf{ii}\right)$ a quadratic form $\mathfrak{K}:\hat{W}\rightarrow\mathbb{R}$
solving \eqref{nkeg}, $\left(\mathbf{iii}\right)$ a solution $\hat{h}$
of \eqref{epot}, and $\left(\mathbf{vi}\right)$ a quadratic form
$\mathfrak{L}:W\rightarrow\mathbb{R}$. 
\end{theorem}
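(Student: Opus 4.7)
The plan is to assemble this theorem as a direct consequence of the three structural results already established in the section: the decomposition Lemma for quadratic forms, the Proposition on the kinetic equation, and the pointwise vanishing of $\mathbb{F}(\mathfrak{L}\circ\mathfrak{p})$ on $\hat{W}$ computed just before the statement. The argument will add essentially no new content; it will merely track how the pair of matching conditions transforms under the bijection of the Lemma.

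First I would invoke the Lemma to replace the unknown quadratic form $\hat{\mathfrak{H}}$ by a triple $(\hat{W},\mathfrak{K},\mathfrak{L})$, with $\hat{W}$ a complement of $W$, $\mathfrak{K}$ a quadratic form on $\hat{W}$, and $\mathfrak{L}$ a quadratic form on $W$, via the explicit formula $\hat{\mathfrak{H}}=\mathfrak{K}\circ\hat{\mathfrak{p}}+\mathfrak{L}\circ\mathfrak{p}$ of \eqref{v}. Since $\hat{H}$ is simple, it is determined by $\hat{\mathfrak{H}}$ together with the scalar $\hat{h}$, so the unknown $\hat{H}$ is in unconstrained bijection with $4$-tuples $(\hat{W},\mathfrak{K},\mathfrak{L},\hat{h})$. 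The content of the theorem is that, once the matching conditions are imposed, $\hat{W}$ remains free among complements of $W$, $\mathfrak{K}$ must satisfy \eqref{nkeg}, $\hat{h}$ must satisfy \eqref{epot}, and $\mathfrak{L}$ stays arbitrary.

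For the kinetic equation I would simply quote the Proposition: $\hat{\mathfrak{H}}$ solves \eqref{keg} if and only if $\mathfrak{K}$ solves \eqref{nkeg}, irrespective of the choice of $\mathfrak{L}$. For the potential equation I would substitute the decomposition $\hat{\mathfrak{H}}=\mathfrak{K}\circ\hat{\mathfrak{p}}+\mathfrak{L}\circ\mathfrak{p}$ into \eqref{peg}; additivity of $\mathbb{F}$ combined with the already-established vanishing $\mathbb{F}(\mathfrak{L}\circ\mathfrak{p})(\sigma)=0$ for $\sigma\in\hat{W}$ shows that the $\mathfrak{L}$-contribution drops out, so \eqref{peg} collapses to exactly \eqref{epot}, whose data are $(\hat{W},\mathfrak{K})$ and whose unknown is $\hat{h}$.

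Combining these observations yields the claimed correspondence between solutions $\hat{H}$ of \eqref{keg}--\eqref{peg} and $4$-tuples satisfying (i)--(iii) and the quadratic-form condition on $\mathfrak{L}$. No step poses a genuine obstacle; the only item requiring mild care is the univocal character of the description, and for that I would appeal to the Remark after the Lemma, which forces the $\hat{W}$ recovered from $\hat{\mathfrak{H}}$ to coincide with the $\hat{\mathfrak{H}}$-orthogonal of $W$, so the association is unambiguous in both directions.
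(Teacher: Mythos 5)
Your proposal is correct and follows essentially the same route as the paper, which states Theorem \ref{1} as a summary of the preceding Lemma, Proposition, and the computation showing $\mathbb{F}(\mathfrak{L}\circ\mathfrak{p})$ vanishes on $\hat{W}$; you have simply made explicit the assembly the paper leaves implicit, including the correct appeal to the Remark for uniqueness of $\hat{W}$ as the $\hat{\mathfrak{H}}$-orthogonal of $W$.
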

\bigskip{}

\begin{theorem}
\label{2} Let $\hat{W}$ be a complement of $W$ such that $\hat{W}^{\sharp}=\mathbb{F}\mathfrak{H}(\hat{W})$
is integrable. Then, in a coordinate chart $\left(U,\left(q^{1},\dots,q^{n}\right)\right)$
of $Q$ satisfying \eqref{what}, the Equations \eqref{nkeg} and
\eqref{epot} translate to \eqref{lke} and \eqref{lpe}, respectively. 
\end{theorem}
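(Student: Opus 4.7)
The plan is to translate the intrinsic equations \eqref{nkeg} and \eqref{epot} into coordinates by first setting up a convenient local basis adapted to the splitting $T^{*}Q = \hat{W} \oplus W$. I start in the chart of Lemma \ref{int}, where $\hat{W}^{\sharp}$ is spanned by $\partial/\partial q^{1},\dots,\partial/\partial q^{n-m}$, and introduce the matrices $\mathbb{H}^{ij}$ and $\mathbb{H}_{ij}$ via \eqref{hfh} and \eqref{h-1}, noting $\mathbb{H}^{ij}\mathbb{H}_{jk}=\delta^{i}_{k}$. The co-vectors $\sigma_{\mu} := \mathbb{F}\mathfrak{H}^{-1}(\partial/\partial q^{\mu}) = \mathbb{H}_{\mu k}\,\mathrm{d}q^{k}$ form a local frame for $\hat{W}$, and the projection $\hat{\mathfrak{p}}$ is encoded by the matrix $\hat{\mathbb{P}}^{k\mu}$ defined in \eqref{pkmu}. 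Applying $\hat{\mathfrak{p}}$ to $\sigma_{\mu}$ and using that $\hat{\mathfrak{p}}$ restricts to the identity on $\hat{W}$ yields the key duality $\mathbb{H}_{\tau k}\hat{\mathbb{P}}^{k\mu} = \delta^{\mu}_{\tau}$ of \eqref{phd}; together with \eqref{hh-1}, this will be the workhorse for the index calculations that follow.

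Next I tackle the kinetic equation. Introducing the symmetric matrix $\mathbb{K}_{\mu\nu} := \langle \sigma_{\mu}, \kappa^{\sharp}(\sigma_{\nu})\rangle$, a direct computation gives the local expression \eqref{dp} for $\mathfrak{K}\circ\hat{\mathfrak{p}}$. Plugging this into the canonical Poisson bracket formula \eqref{cpb}, together with the expression \eqref{hpdq} for $\mathfrak{H}$, transforms $\{\mathfrak{K}\circ\hat{\mathfrak{p}},\mathfrak{H}\}(p_{k}\,\mathrm{d}q^{k})$ into a cubic polynomial in the $p_{i}$. A generic covector in $\hat{W}$ has the form $a^{\tau}\sigma_{\tau}$, so that its coordinates are $p_{k}=a^{\tau}\mathbb{H}_{\tau k}$; after substitution and repeated application of \eqref{hh-1} and \eqref{phd} to contract the matrices $\mathbb{H}^{kl}$ and $\mathbb{H}_{\tau k}$ against $\hat{\mathbb{P}}$, the cubic reduces to \eqref{lke} with the tensor $\mathbb{G}$ of \eqref{G} appearing as the coefficient of the mixed derivative terms (and the sign of the second term in \eqref{G} coming from transferring the $\partial/\partial q^{k}\mathbb{H}^{ij}$ derivative to $\mathbb{H}_{\tau_{2} i}\mathbb{H}_{\tau_{3}j}$).

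For the potential equation the work is shorter. Using \eqref{hfh} and the substitution $p_{k}=a^{\tau}\mathbb{H}_{\tau k}$ gives $\mathbb{F}\mathfrak{H}(a^{\tau}\mathbb{H}_{\tau k}\,\mathrm{d}q^{k}) = a^{\tau}\partial/\partial q^{\tau}$, as in \eqref{fha}, while differentiating \eqref{dp} in the fiber variables and applying \eqref{phd} yields \eqref{fdp} for $\mathbb{F}(\mathfrak{K}\circ\hat{\mathfrak{p}})$. Inserting both expressions into \eqref{epot} produces a linear form in the $a^{\mu}$ which must vanish identically in $a$, and this is exactly \eqref{lpe}.

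The only delicate points I anticipate are bookkeeping ones: making sure that when the parametrization $p_{k}=a^{\tau}\mathbb{H}_{\tau k}$ is plugged into the symmetric cubic in $p_{i}p_{j}p_{l}$, every contraction of $\mathbb{H}^{kl}$ with $\mathbb{H}_{\tau k}$ collapses to a Kronecker delta by \eqref{hh-1}, and every contraction $\mathbb{H}_{\tau k}\hat{\mathbb{P}}^{k\mu}$ collapses by \eqref{phd}, leaving precisely the two terms of $\mathbb{G}$ in \eqref{G}. Once those cancellations are carried out carefully, the theorem follows, since Lemma \ref{int} guarantees that the chart underlying the whole calculation exists whenever $\hat{W}^{\sharp}$ is integrable.
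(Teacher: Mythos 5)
Your proposal is correct and follows essentially the same route as the paper: Theorem \ref{2} is stated there as a summary of the computation in the ``Local expressions'' subsection, which uses exactly your frame $\sigma_{\mu}=\mathbb{F}\mathfrak{H}^{-1}(\partial/\partial q^{\mu})$, the identities \eqref{hh-1} and \eqref{phd}, the expression \eqref{dp}, the substitution $p_{k}=a^{\tau}\mathbb{H}_{\tau k}$ into the Poisson bracket to obtain \eqref{lke}--\eqref{G}, and the computations \eqref{fha}--\eqref{fdp} to obtain \eqref{lpe}. Your remark about the sign in the second term of \eqref{G} arising from transferring the derivative from $\mathbb{H}^{ij}$ to $\mathbb{H}_{\tau_{2}i}\mathbb{H}_{\tau_{3}j}$ is also the correct bookkeeping.
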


\section{Solving the potential equation after solving the kinetic one}

\label{solvp}

In this section, given a solution of the kinetic equation \eqref{nkeg},
we shall study under which conditions a positive solution $\hat{h}$
of the potential equation \eqref{epot} does exist. In fact, we will
develop a sistematic procedure to find (unless locally) an explicit
solution of this equation (up to quadratures). In addition, we will
provide necessary and sufficient conditions to ensure the positivity
of the solution.

\subsection{An integrability condition}

Suppose that the conditions in Theorems \ref{1} and \ref{2} hold,
and that a solution $\mathfrak{K}$ of \eqref{nkeg} is given. We
want to find a (local) solution $\hat{h}$ of the potential equation
\eqref{epot} where $\hat{\mathfrak{p}}$ and $\mathfrak{K}$ are
considered as datum. Given $q_{0}\in Q$, denote by $\left(U,\varphi=\left(q^{1},\dots,q^{n}\right)\right)$
the coordinate chart in which $\hat{W}$ is given by \eqref{what}.
In such coordinates, according to Theorem \ref{2}, Eq. \eqref{epot}
translates to Eq. \eqref{lpe}. As it is well-known, the necessary
and sufficient conditions to integrate these equations are 
\begin{equation}
\frac{\partial}{\partial q^{\mu}}\left(\frac{\partial h}{\partial q^{k}}\,\hat{\mathbb{P}}^{k\tau}\,\mathbb{K}_{\tau\nu}\right)=\frac{\partial}{\partial q^{\nu}}\left(\frac{\partial h}{\partial q^{k}}\,\hat{\mathbb{P}}^{k\tau}\,\mathbb{K}_{\tau\mu}\right),\label{condicionexistencia}
\end{equation}
for all $\mu,\nu\leq n-m$. In global terms they say that 
\begin{equation}
\mathrm{d}\left(\mathrm{d}h\circ\mathbb{F}\left(\mathfrak{K}\circ\hat{\mathfrak{p}}\right)\circ\mathbb{F}\mathfrak{H}^{-1}\right)\left(\mathsf{u},\mathsf{v}\right)=0,\;\;\;\forall\mathsf{u},\mathsf{v}\in\hat{W}^{\sharp}.\label{cf}
\end{equation}
In such a case, the solution not only exists but, furthermore, it
can be computed up to quadratures. We shall give a formula for $\hat{h}$
at the end of this section {[}see \eqref{htech}{]}.

\bigskip{}

\begin{remark}
If $n-m=1$, i.e. if the degree of underactuation is one, the subbundle
$\hat{W}^{\sharp}$ is always integrable because of dimensional reasons.
In addition, the condition \eqref{condicionexistencia} reduces to
a single equation for $\mu=\nu=1$, that immediately holds. Therefore,
if we find a solution $\mathfrak{K}$ of the kinetic equation for
a system with one degree of underactuation, not only there exists
a solution of the single potential equation (as it is already known
in the literature), but even more, we can construct such a solution
(in an appropriate coordinate chart) up to quadratures.
\end{remark}

\subsection{A positivity condition}

Suppose that $q_{0}$ is a critical point of $h$ and, for simplicity,
assume that the above given coordinate chart $\left(U,\varphi\right)$
is centered at $q_{0}$, i.e. $\varphi\left(q_{0}\right)=\left(0,\dots,0\right)=:\mathbf{0}$.
As we have mentioned in the Introduction, one is actually interested
in a solution $\hat{h}$ of \eqref{lpe} which is positive-definite
around $q_{0}$. Identifying $\hat{h}$ with its local representative
$\hat{h}\circ\varphi^{-1}$, this is the same as saying that: 
\begin{enumerate}
\item $\mathbf{0}$ is a critical point of $\hat{h}$ and 
\item the Hessian matrix of $\hat{h}$ at $\mathbf{0}$ is positive-definite. 
\end{enumerate}
If \eqref{condicionexistencia} holds, then a solution $\hat{h}$
of \eqref{lpe} exists and we can use the Method of Characteristics
to construct it. To do that, we must impose a boundary condition,
for instance, along the subset 
\[
S:=\varphi\left(U\right)\cap\left(\left\{ \left(0,\dots,0\right)\right\} \times\mathbb{R}^{m}\right)\subseteq\mathbb{R}^{n}.
\]
So, let us impose on $S$ the condition 
\begin{equation}
\hat{h}(0,\dots,0,s^{1},\dots,s^{m})=\frac{\varpi}{2}\,\sum_{a=1}^{m}\left(s^{a}\right)^{2},\label{vkappa'}
\end{equation}
for some constant $\varpi>0$. It follows from \eqref{lpe}, the boundary
condition above and the fact that $\mathbf{0}$ is a critical point
of $h$ that $\partial\hat{h}/\partial q^{i}(\mathbf{0})=0$ for all
$i=1,\dots,n$. Then, $\mathbf{0}$ is a critical point of $\hat{h}$.
On the other hand, the Hessian of $\hat{h}$ at $\mathbf{0}$, 
\begin{equation}
\text{Hess}(\hat{h})_{ij}(\mathbf{0})=\frac{\partial^{2}\hat{h}}{\partial q^{i}\partial q^{j}}(\mathbf{0}),\label{matrizM}
\end{equation}
can be written 
\[
\text{Hess}(\hat{h})(\mathbf{0})=\begin{bmatrix}\mathbb{M} & \mathbb{A}\\
\mathbb{A}^{t} & \varpi\,\mathbb{I}_{m}
\end{bmatrix},
\]
where: 
\begin{itemize}
\item $\mathbb{I}_{m}$ is the $m\times m$ identity matrix; 
\item $\mathbb{A}$ is the $\left(\left(n-m\right)\times m\right)$-matrix
with entries 
\begin{align}
\mathbb{A}_{\mu a}:= & \left(\text{Hess}(h)_{n-m+a,k}\,\hat{\mathbb{P}}^{k\tau}\,\mathbb{K}_{\tau\mu}\right)(\mathbf{0})\nonumber \\
= & \frac{\partial}{\partial q^{n-m+a}}\left(\frac{\partial h}{\partial q^{k}}\,\hat{\mathbb{P}}^{k\tau}\,\mathbb{K}_{\tau\mu}\right)\left(\mathbf{0}\right),\qquad\mu\leq n-m\text{ and }a\leq m\label{Ai}
\end{align}
%$\mu\leq n-m$ and $a\leq m$;
and 
\item $\mathbb{M}$ is the square matrix of dimension $n-m$ with 
\begin{equation}
\mathbb{M}_{\mu\nu}:=\left(\text{Hess}(h)_{\mu k}\,\hat{\mathbb{P}}^{k\tau}\,\mathbb{K}_{\tau\nu}\right)(\mathbf{0})=\frac{\partial}{\partial q^{\mu}}\left(\frac{\partial h}{\partial q^{k}}\,\hat{\mathbb{P}}^{k\tau}\,\mathbb{K}_{\tau\nu}\right)\left(\mathbf{0}\right).\label{N}
\end{equation}
%All that follows from the criticality of $h$ at $\mathbf{0}$ and
%the Eqs. \eqref{lpe}, \eqref{vkappa'} and \eqref{matrizM}. 
\end{itemize}
\begin{remark}
The entries of the matrices $\mathbb{A}$ and $\mathbb{M}$ are obtained
just by differentiating the potential equation \eqref{lpe} and using
Eqs. \eqref{vkappa'} and \eqref{matrizM} and the criticallity of
$\mathbf{0}$ for $h$. 
\end{remark}
\begin{proposition}
Consider a local solution $\hat{h}$ of \eqref{lpe}. If $\text{Hess}(\hat{h})(\mathbf{0})$
is positive-definite, then the matrix $\mathbb{M}$, given by \eqref{N},
is positive-definite. Now, suppose that $\hat{h}$ satisfies \eqref{vkappa'}.
If $\mathbb{M}$ is positive-definite, then there exists a constant
$\varpi$ such that $\text{Hess}(\hat{h})(\mathbf{0})$ is positive-definite
too. 
\end{proposition}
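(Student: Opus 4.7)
The approach is pure linear algebra on the block form of $\text{Hess}(\hat{h})(\mathbf{0})$ already derived in the excerpt. The plan is to combine two standard facts: principal submatrices of a positive-definite symmetric matrix are positive-definite, and the Schur complement criterion for block matrices.

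For the first implication I would simply observe that, independently of the boundary condition \eqref{vkappa'}, the critical assumption on $h$ at $\mathbf{0}$ together with a single differentiation of \eqref{lpe} forces the entries of the top-left $(n-m)\times(n-m)$ block of $\text{Hess}(\hat{h})(\mathbf{0})$ to be exactly $\mathbb{M}_{\mu\nu}$ as in \eqref{N}. Since $\mathbb{M}$ is a principal submatrix of a positive-definite symmetric matrix, it is positive-definite.

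For the converse, I would invoke the boundary condition \eqref{vkappa'} to ensure that $\text{Hess}(\hat{h})(\mathbf{0})$ takes the stated block form with bottom-right block $\varpi\,\mathbb{I}_m$, and then apply the Schur complement criterion. The key preliminary observation is that, by inspection of \eqref{N} and \eqref{Ai}, the matrices $\mathbb{M}$ and $\mathbb{A}$ depend only on the data $(h,\hat{\mathbb{P}},\mathbb{K})$ at $\mathbf{0}$ and are in particular independent of $\varpi$. Given $\mathbb{M}>0$, the Schur complement criterion states that
\[
\begin{bmatrix}\mathbb{M}&\mathbb{A}\\\mathbb{A}^{t}&\varpi\,\mathbb{I}_m\end{bmatrix}>0\quad\Longleftrightarrow\quad \varpi\,\mathbb{I}_m-\mathbb{A}^{t}\mathbb{M}^{-1}\mathbb{A}>0.
\]
The matrix $\mathbb{A}^{t}\mathbb{M}^{-1}\mathbb{A}$ is fixed and symmetric positive-semidefinite, so taking any $\varpi$ strictly larger than its largest eigenvalue makes the right-hand side positive-definite, which concludes the proof.

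I do not anticipate a serious obstacle; the only point that demands care is confirming the $\varpi$-independence of $\mathbb{M}$ and $\mathbb{A}$, so that enlarging $\varpi$ really does improve the Schur complement without disturbing the other blocks. That independence is immediate from the formulas \eqref{N} and \eqref{Ai}, as already noted in the remark preceding the proposition.
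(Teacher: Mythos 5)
Your proof is correct, and the first implication is argued exactly as in the paper (the top-left block of $\text{Hess}(\hat{h})(\mathbf{0})$ is $\mathbb{M}$ by differentiating \eqref{lpe}, and a principal submatrix of a positive-definite symmetric matrix is positive-definite). For the converse you take a genuinely different route. The paper does not use the Schur complement: it bounds the quadratic form directly, writing
\[
\mathbf{u}\,\mathbb{M}\,\mathbf{u}^{t}+2\,\mathbf{u}\,\mathbb{A}\,\mathbf{w}^{t}+\varpi\,\Vert\mathbf{w}\Vert^{2}\geq\alpha^{2}\Vert\mathbf{u}\Vert^{2}-2\,\Vert\mathbf{u}\Vert\,\Vert\mathbb{A}\Vert\,\Vert\mathbf{w}\Vert+\varpi\,\Vert\mathbf{w}\Vert^{2}
\]
with $\alpha^{2}=\lambda_{\min}^{\mathbb{M}}$, and completing the square to obtain the sufficient condition $\varpi>\Vert\mathbb{A}\Vert^{2}/\lambda_{\min}^{\mathbb{M}}$. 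Your Schur-complement criterion $\varpi\,\mathbb{I}_{m}-\mathbb{A}^{t}\mathbb{M}^{-1}\mathbb{A}>0$ is equally valid (the block matrix is symmetric and $\mathbb{M}>0$, so the criterion applies, and $\mathbb{A}^{t}\mathbb{M}^{-1}\mathbb{A}$ is indeed independent of $\varpi$); it even yields the \emph{sharp} threshold $\varpi>\lambda_{\max}\bigl(\mathbb{A}^{t}\mathbb{M}^{-1}\mathbb{A}\bigr)$, whereas the paper's bound is only sufficient since $\lambda_{\max}\bigl(\mathbb{A}^{t}\mathbb{M}^{-1}\mathbb{A}\bigr)\leq\Vert\mathbb{A}\Vert^{2}/\lambda_{\min}^{\mathbb{M}}$. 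What the paper's cruder estimate buys is an explicit formula for an admissible $\varpi$ in terms of the entries of $\mathbb{A}$ and the least eigenvalue of $\mathbb{M}$ alone, without inverting $\mathbb{M}$; this is precisely the bound quoted in step 7 of the integration procedure and in Eq. \eqref{varfi}, so the two approaches lead to slightly different practical prescriptions for choosing $\varpi$.
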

\begin{proof}
The first part of the proposition easily follows from the fact that
$\mathbb{M}$ is an upper-left corner square sub-matrix of $\text{Hess}(\hat{h})(\mathbf{0})$.
So, let us prove the second part. Assume that $\mathbb{M}$ is positive-definite
and take $\mathbf{u}\in\mathbb{R}^{n-m}$ and $\mathbf{w}\in\mathbb{R}^{m}$.
Then 
\begin{align*}
(\mathbf{u},\mathbf{w})\,\begin{bmatrix}\mathbb{M} & \mathbb{A}\\
\mathbb{A}^{t} & \varpi\,\mathbb{I}
\end{bmatrix}\,\begin{bmatrix}\mathbf{u}^{t}\\
\mathbf{w}^{t}
\end{bmatrix} & =\mathbf{u}\,\mathbb{M}\,\mathbf{u}^{t}+2\,\mathbf{u}\,\mathbb{A}\,\mathbf{w}^{t}+\varpi\,\Vert\mathbf{w}\Vert^{2}\\
 & =\left\Vert \mathbf{u}\right\Vert _{\mathbb{M}}^{2}+2\,\mathbf{u}\,\mathbb{A}\,\mathbf{w}^{t}+\varpi\,\Vert\mathbf{w}\Vert^{2}\\
 & \geq\left\Vert \mathbf{u}\right\Vert _{\mathbb{M}}^{2}-2\,\Vert\mathbf{u}\Vert\,\Vert\mathbb{A}\Vert\,\Vert\mathbf{w}\Vert+\varpi\,\Vert\mathbf{w}\Vert^{2},
\end{align*}
where $\Vert\cdot\Vert$ denotes the Euclidean norm in the corresponding
vector space,\footnote{We can also use the operator norm for the matrix $\mathbb{A}$.}
and $\Vert\cdot\Vert_{\mathbb{M}}$ is the norm associated with $\mathbb{M}$.
In particular {[}recall Eq. \eqref{Ai}{]}, 
\begin{equation}
\left\Vert \mathbb{A}\right\Vert ^{2}=\sum_{\mu=1}^{n-m}\sum_{a=1}^{m}\left|\mathbb{A}_{\mu a}\right|^{2}=\sum_{\mu=1}^{n-m}\sum_{a=1}^{m}\left|\frac{\partial}{\partial q^{n-m+a}}\left(\frac{\partial h}{\partial q^{k}}\,\hat{\mathbb{P}}^{k\tau}\,\mathbb{K}_{\tau\mu}\right)\left(\mathbf{0}\right)\right|^{2}.\label{na}
\end{equation}
Since every norm on a finite-dimensional vector space is equivalent
to the Euclidean norm, there exists a positive constant $\alpha$
such that 
\[
\left\Vert \mathbf{u}\right\Vert _{\mathbb{M}}\geq\alpha\left\Vert \mathbf{u}\right\Vert ,\qquad\forall\,\mathbf{u}\in\mathbb{R}^{n-m}.
\]
The constant $\alpha$ may be computed as 
\begin{equation}
\alpha=\min_{\|\mathbf{u}\|=1}\sqrt{\mathbf{u}\,\mathbb{M}\,\mathbf{u}^{t}}=\sqrt{\lambda_{\min}^{\mathbb{M}}},\label{metriccons'}
\end{equation}
where $\lambda_{\min}^{\mathbb{M}}$ is the least eigenvalue of $\mathbb{M}$.
Then, 
\begin{align*}
(\mathbf{u},\mathbf{w})\,\begin{bmatrix}\mathbb{M} & \mathbb{A}\\
\mathbb{A}^{t} & \varpi\,\mathbb{I}
\end{bmatrix}\,\begin{bmatrix}\mathbf{u}^{t}\\
\mathbf{w}^{t}
\end{bmatrix} & \geq\alpha^{2}\left\Vert \mathbf{u}\right\Vert ^{2}-2\,\Vert\mathbf{u}\Vert\,\Vert\mathbb{A}\Vert\,\Vert\mathbf{w}\Vert+\varpi\,\Vert\mathbf{w}\Vert^{2}\\
 & =\alpha^{2}\left(\left\Vert \mathbf{u}\right\Vert ^{2}-2\,\Vert\mathbf{u}\Vert\,\Vert\mathbb{A}\Vert\,\frac{\Vert\mathbf{w}\Vert}{\alpha^{2}}+\varpi\,\frac{\Vert\mathbf{w}\Vert^{2}}{\alpha^{2}}\right).
\end{align*}
If $\mathbb{A}=\mathbf{0}$, this expression is clearly nonnegative
and vanishes only when $\mathbf{u}$ and $\mathbf{w}$ vanish. Suppose
now that $\mathbb{A}\neq\mathbf{0}$. Defining $\beta=\frac{\Vert\mathbb{A}\Vert}{\alpha}$,
we have 
\begin{align*}
(\mathbf{u},\mathbf{w})\,\begin{bmatrix}\mathbb{M} & \mathbb{A}\\
\mathbb{A}^{t} & \varpi\,\mathbb{I}
\end{bmatrix}\,\begin{bmatrix}\mathbf{u}^{t}\\
\mathbf{w}^{t}
\end{bmatrix} & \geq\alpha^{2}\beta^{2}\left(\frac{\Vert\mathbf{u}\Vert^{2}}{\beta^{2}}-2\,\frac{\Vert\mathbf{w}\Vert}{\alpha}\,\frac{\Vert\mathbf{u}\Vert}{\beta}+\frac{\varpi}{\beta^{2}}\,\frac{\Vert\mathbf{w}\Vert^{2}}{\alpha^{2}}\right)\\
 & =\alpha^{2}\beta^{2}\left[\left(\frac{\varpi}{\beta^{2}}-1\right)\left(\frac{\Vert\mathbf{w}\Vert}{\alpha}\right)^{2}+\left(\frac{\Vert\mathbf{w}\Vert}{\alpha}-\frac{\Vert\mathbf{u}\Vert}{\beta}\right)^{2}\right].
\end{align*}
Hence, if we choose $\varpi>\beta^{2}$, i.e. 
\begin{equation}
\varpi>\left(\frac{\Vert\mathbb{A}\Vert}{\alpha}\right)^{2},\label{kappamayor}
\end{equation}
it follows that $\text{Hess}(\hat{h})(\mathbf{0})$ is positive-definite. 
\end{proof}
\bigskip{}

In other words, the previous proposition says that we can find a solution
$\hat{h}$ positive-definite around $\mathbf{0}$ if and only if the
matrix $\mathbb{M}$ is also positive-definite. 
\begin{remark}
Since $q_{0}$ is a critical point of the potential term $h$, we
can find a coordinate-free expression of the matrix $\mathbb{M}$
using the covariant Hessian tensor $\nabla\nabla h$, given by 
\[
\nabla\nabla h(X,Y)=X(Yh)-\left<\mathrm{d}h,(\nabla_{X}Y)\right>,
\]
whose matrix representation at $q_{0}$ is exactly Hessian matrix
of $h$. Indeed, it is easy to see that 
\[
\mathbb{M}_{\mu\nu}=\mathfrak{M}\left(\left.\frac{\partial}{\partial q^{\mu}}\right|_{q_{0}},\left.\frac{\partial}{\partial q^{\nu}}\right|_{q_{0}}\right),
\]
where 
\begin{equation}
\mathfrak{M}\left(\mathsf{u},\mathsf{v}\right)=\nabla\nabla h\left(\mathbb{F}\left(\mathfrak{K}\circ\hat{\mathfrak{p}}\right)\circ\mathbb{F}\mathfrak{H}^{-1}\left(\mathsf{u}\right),\mathsf{v}\right).\label{qfm}
\end{equation}
\end{remark}

\subsection{The integration procedure}

%\label{sec:exex} 
We shall now condense the results of the previous subsections in the
following theorem. Consider again an underactuated system defined
by a triple $\left(\mathfrak{H},h,W\right)$. %\label{thm:kinimplicapot} 

\begin{theorem}
Let $q_{0}$ be a critical point of $h$ and $\left(U,\varphi\right)$
a coordinate neighborhood of $q_{0}$ such that the subbundle $\hat{W}$
given by \eqref{what} is a complement of $W$. Let $\mathfrak{K}$
be a solution of the kinetic equation \eqref{nkeg} for $\hat{W}$.
Then, a solution $\hat{h}$ of the potential equation \eqref{epot}
exists around $q_{0}$ if and only if the condition 
\[
\left.\mathrm{d}\left(\mathrm{d}h\circ\mathbb{F}\left(\mathfrak{K}\circ\hat{\mathfrak{p}}\right)\circ\mathbb{F}\mathfrak{H}^{-1}\right)\right|_{\hat{W}^{\sharp}\times\hat{W}^{\sharp}}=0,%\label{IC}
\]
holds {[}see Eq. \eqref{cf}{]}. Moreover, in such a case, $\hat{h}$
can be found up to quadratures. On the other hand, $\hat{h}$ can
be chosen positive-definite around $q_{0}$ if and only if the bilinear
form {[}see Eq. \eqref{qfm}{]} 
\[
\mathfrak{M}=\nabla\nabla h\circ\left.\left(\mathbb{F}\left(\mathfrak{K}\circ\hat{\mathfrak{p}}\right)\circ\mathbb{F}\mathfrak{H}^{-1}\times\text{id}_{TQ}\right)\right|_{\hat{W}^{\sharp}\times\hat{W}^{\sharp}}.%\label{eq:mbi-1}
\]
is positive-definite at that point. 
\end{theorem}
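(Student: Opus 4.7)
The theorem is a consolidation of the three previous results into a coordinate-free statement, so the natural plan is to work in the local chart provided by Lemma~\ref{int}. Fix $q_{0}$, choose $(U,\varphi)$ centered at $q_{0}$ so that \eqref{what} holds, and appeal to Theorem~\ref{2} to rewrite the potential equation \eqref{epot} as the explicit system \eqref{lpe}. Under this reduction the claim breaks into three parts: $(\mathbf{i})$ existence of a solution iff the integrability condition \eqref{cf} holds; $(\mathbf{ii})$ when it does, $\hat{h}$ is obtained up to quadratures; $(\mathbf{iii})$ positive-definiteness of $\mathfrak{M}$ characterizes the existence of a positive-definite local solution around $q_{0}$.

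For $(\mathbf{i})$ and $(\mathbf{ii})$, the system \eqref{lpe} has the form $\partial\hat{h}/\partial q^{\mu}=F_{\mu}(\mathbf{q})$ with $\mu\le n-m$, where the right-hand side is prescribed by the data $h,\mathfrak{K},\hat{\mathfrak{p}}$. This is the standard Poincar\'e-lemma setting for overdetermined gradient-like systems: a local solution exists iff the compatibility equations \eqref{condicionexistencia} hold, in which case the solution is produced by a single path integral (equivalently, by iterated one-dimensional integration in $q^{1},\dots,q^{n-m}$ with the remaining coordinates treated as parameters), which is the meaning of \emph{up to quadratures}. To match this with the intrinsic condition \eqref{cf}, I would define $\eta:=\mathrm{d}h\circ\mathbb{F}(\mathfrak{K}\circ\hat{\mathfrak{p}})\circ\mathbb{F}\mathfrak{H}^{-1}$ as a one-form along $\hat{W}^{\sharp}$; using \eqref{fha} and \eqref{fdp} its evaluation on the local frame $\{\partial/\partial q^{\mu}\}_{\mu\le n-m}$ of $\hat{W}^{\sharp}$ reproduces exactly the functions $F_{\mu}=(\partial h/\partial q^{k})\,\hat{\mathbb{P}}^{k\tau}\,\mathbb{K}_{\tau\mu}$, so that $\mathrm{d}\eta|_{\hat{W}^{\sharp}\times\hat{W}^{\sharp}}=0$ is equivalent to \eqref{condicionexistencia}.

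For $(\mathbf{iii})$, I would invoke the positivity proposition of the previous subsection: imposing the boundary condition \eqref{vkappa'} with $\varpi$ satisfying \eqref{kappamayor} produces a positive-definite solution whenever the matrix $\mathbb{M}$ defined in \eqref{N} is positive-definite, and conversely the Hessian of any positive-definite local solution forces $\mathbb{M}$ to be positive-definite because it is a principal submatrix of that Hessian at $\mathbf{0}$. It then remains to identify $\mathbb{M}$ with the matrix of $\mathfrak{M}|_{\hat{W}^{\sharp}_{q_{0}}\times\hat{W}^{\sharp}_{q_{0}}}$ in the frame $\{\partial/\partial q^{\mu}|_{q_{0}}\}_{\mu\le n-m}$. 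Since $q_{0}$ is a critical point of $h$, the ordinary Hessian and the covariant Hessian $\nabla\nabla h$ agree at $q_{0}$, so \eqref{qfm} yields $\mathbb{M}_{\mu\nu}=\mathfrak{M}(\partial/\partial q^{\mu}|_{q_{0}},\partial/\partial q^{\nu}|_{q_{0}})$, and positive-definiteness transfers between the two descriptions.

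The main obstacle I anticipate is the \emph{intrinsic reformulation}: one must verify that the coordinate functions $F_{\mu}$ and matrix entries $\mathbb{M}_{\mu\nu}$ really are the components of the intrinsic objects $\eta$ and $\mathfrak{M}$ with respect to the frame of $\hat{W}^{\sharp}$ dictated by the chart of Lemma~\ref{int}, so that the resulting criteria are genuinely independent of the choice of adapted coordinates. Once that identification is cleanly in place, the two equivalences follow directly from the Poincar\'e lemma and the positivity proposition, with no further computation required.
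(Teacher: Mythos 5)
Your proposal is correct and follows essentially the same route as the paper, which states this theorem explicitly as a condensation of the preceding subsections: reduction to the local system \eqref{lpe} via Theorem \ref{2}, the standard integrability criterion \eqref{condicionexistencia} (equivalently \eqref{cf}) with the quadrature formula \eqref{htech}, and the positivity proposition together with the identification of $\mathbb{M}$ with $\mathfrak{M}$ through the covariant Hessian at the critical point. No gaps.
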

Gathering all the results we have presented so far, we can state a
procedure to explicitly construct local solutions of the potential
equation that are positive-definite around $q_{0}$, provided a solution
of the kinetic equation is given. We must proceed as follows: 
\begin{enumerate}
\item find coordinates $(q^{1},\dots,q^{n})$ centered at $q_{0}$ such
that 
\[
\hat{W}:=\mathbb{F}\mathfrak{H}^{-1}\left(\mathsf{span}\left\{ \left.\partial\right/\partial q^{1},\dots,\left.\partial\right/\partial q^{n-m}\right\} \right)
\]
is a complement of $W$ (which can be done just reordering an arbitrary
coordinate system centered at $q_{0}$, as mentioned in the proof
of Lemma \ref{int}); 
\item consider a (local) solution $\mathfrak{K}$ of the kinetic equation
\eqref{nkeg} for $\hat{W}$; 
\item in the coordinates of the step $1$, define the functions $u_{\mu}:=\frac{\partial h}{\partial q^{k}}\,\hat{\mathbb{P}}^{k\tau}\,\mathbb{K}_{\tau\mu}$,
for $\mu=1,\dots,n-m$; 
\item verify that $\frac{\partial u_{\nu}}{\partial q^{\mu}}=\frac{\partial u_{\mu}}{\partial q^{\nu}}$
for all $\mu,\nu\leq n-m$ {[}see Eq. \eqref{condicionexistencia}{]}; 
\item define $\hat{h}$ as 
\begin{equation}
\begin{array}{lll}
\hat{h}(q^{1},\dots,q^{n}) & := & \sum_{\mu=1}^{n-m}\int_{0}^{q^{\mu}}u_{\mu}(0,\dots,0,t,q^{\mu+1},\dots,q^{n})\,\mathrm{d}t\\
\\
 &  & +\frac{\varpi}{2}\,\sum_{a=1}^{m}\left(q^{n-m+a}\right)^{2}
\end{array}\label{htech}
\end{equation}
for some constant $\varpi$; 
\item check that the matrix $\mathbb{M}$ {[}recall Eq. \eqref{N}{]} is
positive-definite, i.e. check that $\lambda_{\min}^{\mathbb{M}}>0$,
\item choose $\varpi$ such that {[}recall Eqs. \eqref{na}, \eqref{metriccons'}
and \eqref{kappamayor}{]} 
\[
\varpi>\frac{\sum_{\mu=1}^{n-m}\sum_{a=1}^{m}\left(\partial u_{\mu}/\partial q^{n-m+a}(\mathbf{0})\right)^{2}}{\lambda_{\min}^{\mathbb{M}}}.
\]
\end{enumerate}
The idea of studying the potential equation assuming we have a solution
of the kinetic equation appeared also in Reference \cite{lewis}.
In such work, the author finds integrability conditions for the potential
matching conditions using Goldschmidt's integrability theory (see
\cite{gold}). Then, assuming such conditions hold and supposing all
the objects involved belong to the category $C^{\omega}$, it is proved
that there exists indeed a solution of these equations. However, the
positivity of such solutions is not analyzed. Here, although our integrability
conditions are similar to those found in \cite{lewis}, our approach
is valid in the category $C^{\infty}$. Moreover, we give necessary
and sufficient conditions in order to ensure that the solution is
positive-definite and, in addition, we show how to build such solution
by computing ordinary integrals in an appropriate coordinate chart.

\section{Solving both the kinetic and the potential equations}

\label{sec:potkincuad}

In this section we are going to apply the steps above to a particular
subclass of underactuated systems.

\subsection{One degree of underactuation}

Assume that $(H,\mathcal{W})$ has one degree of underactuation, i.e.
$\mathcal{W}$ is given by a vector subbundle $W\subset T^{*}Q$ of
rank $m=n-1$. Suppose that the step $1$ was already performed. In
such a case, we can explicitly find a solution $\mathbb{K}$ of the
local kinetic equation \eqref{lke} (corresponding to the step $2$).
Let us see that. To simplify the calculations, we will write 
\[
q^{1}=x,\qquad q^{1+a}=y^{a},\qquad a=1,\dots,n-1,
\]
and 
\[
\mathbb{K}_{11}=K,\qquad\mathbb{G}_{111}^{11}=G,\qquad\text{and}\qquad\hat{\mathbb{P}}^{k1}=\hat{P}^{k},\;\;\;k=1,\dots,n.%\label{notacionsub1}
\]
Note that, according to \eqref{G}, 
\begin{equation}
G=\frac{\partial\mathbb{H}_{11}}{\partial x}\,\hat{P}^{1}+\frac{\partial\mathbb{H}_{11}}{\partial y^{a}}\,\hat{P}^{1+a}+\frac{\partial\left(\hat{P}^{i}\,\hat{P}^{j}\right)}{\partial x}\,\mathbb{H}_{1i}\,\mathbb{H}_{1j}.\label{Gp}
\end{equation}
Under this notation, Eq. \eqref{lke} reads 
\[
\frac{\partial K}{\partial x}+G\,K=0,%\label{regkinetic''}
\]
whose general solution is 
\begin{equation}
K\left(x,\mathbf{y}\right)=\xi\left(\mathbf{y}\right)\,e^{-\int_{0}^{x}G(t,\mathbf{y})\,\mathrm{d}t}.\label{K}
\end{equation}
In order for $K$ to define a quadratic form, we must ask $\xi$ to
be a positive function. Following with the step $3$, define 
\begin{equation}
u:= u_{1}=\left(\frac{\partial h}{\partial x}\,\hat{P}^{1}+\frac{\partial h}{\partial y^{a}}\,\hat{P}^{1+a}\right)\,K.\label{u}
\end{equation}
It is clear that the step $4$ is trivial in this case. According
to step $5$, we have 
\begin{equation}
\hat{h}\left(x,\mathbf{y}\right)=\int_{0}^{x}u(t,\mathbf{y})\,\mathrm{d}t+\frac{\varpi}{2}\,\sum_{a=1}^{n-1}\left(y^{a}\right)^{2}.\label{htech2}
\end{equation}
The step $6$ reduces to check that the number 
\[
\mathbb{M}_{11}=\lambda_{\min}^{\mathbb{M}}=\frac{\partial u}{\partial x}(\mathbf{0})
\]
is positive. Finally, step 7 says that we must take 
\begin{equation}
\varpi>\frac{\sum_{a=1}^{n-1}\left(\partial u/\partial y^{a}(\mathbf{0})\right)^{2}}{\partial u/\partial x(\mathbf{0})}.\label{varfi}
\end{equation}

\subsection{The planar inverted double pendulum}

\begin{figure}[h]
\centering \includegraphics[height=0.3\textheight]{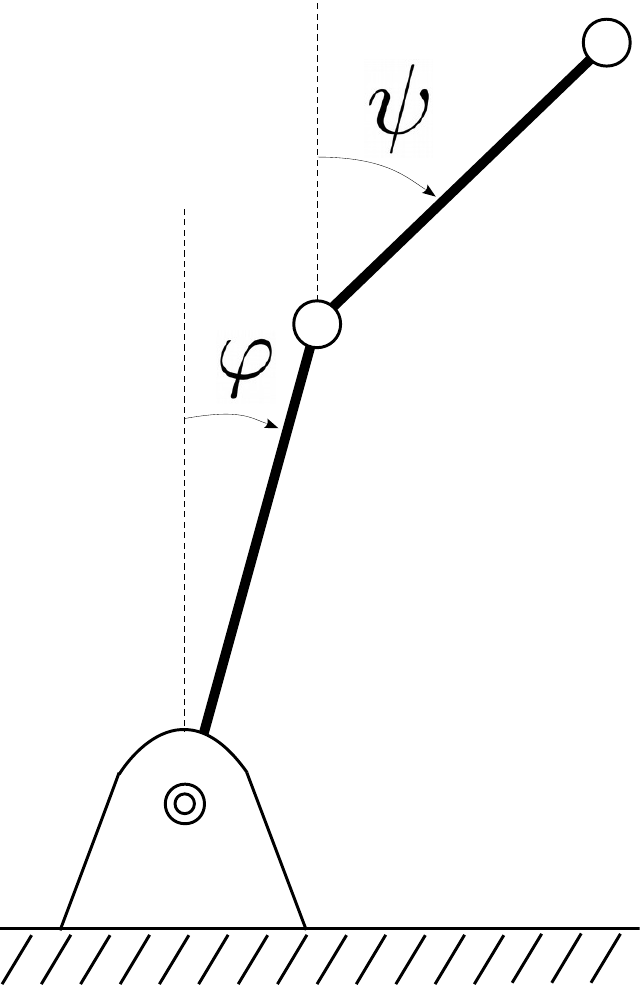}
%\captionsetup{labelformat=empty}
\caption{Planar inverted double pendulum.}\label{fig:pend} 
\end{figure}

Let us end our work with a concrete example. Consider the manifold
$Q=S^{1}\times S^{1}$ and let $\left(U,\left(\psi,\varphi\right)\right)$
be a system of angular coordinates. Consider on $Q$ the simple Hamiltonian
function $H$ with 
\[
\mathfrak{H}\left(\psi,\varphi,p_{\psi},p_{\varphi}\right)=\frac{1}{2\,m}\,\left(p_{\psi},p_{\varphi}\right)\,\left(\begin{array}{cc}
C & -B\cos(\psi-\varphi)\\
-B\cos(\psi-\varphi) & A
\end{array}\right)\,\left(\begin{array}{c}
p_{\psi}\\
p_{\varphi}
\end{array}\right)
\]
and 
\[
h\left(\psi,\varphi\right)=D_{1}\cos\left(\psi\right)+D_{2}\cos\left(\varphi\right),
\]
where $m:= AC-B^{2}\cos^{2}(\psi-\varphi)$ and $A,B,C,D_1$ and $D_2$ are positive constants. This Hamiltonian corresponds to the system depicted in Figure \ref{fig:pend},
the (planar) \textit{inverted double pendulum}, for appropriate values
of the constants $A,B,C,D_{1}$ and $D_{2}$.\footnote{In fact, if we consider massless bars of lengths $L_{1}$ and $L_{2}$
with particles of masses $m_{1}$ and $m_{2}$ attached to the ends,
the values of these constants are $A=m_{1}L_{1}^{2}+m_{2}L_{2}^{2},B=m_{2}L_{1}L_{2},C=m_{2}L_{2}^{2},D_{1}=m_{2}gL_{2}$
and $D_{2}=(m_{1}+m_{2})gL_{1}$, where $g$ is the acceleration of gravity.} Consider in addition the subbundle $W\subseteq T^{*}Q$ generated
by the $1$-form $\mathrm{d}\varphi$. The latter, together with $H$,
define an underactuated system with one actuator, which produces a
torque around the coordinate $\varphi$. To find a solution of the
matching conditions for $\left(\mathfrak{H},h,W\right)$, let us follow
the steps above. 
\begin{enumerate}
\item Since $W=\left\langle \mathrm{d}\varphi\right\rangle $ along $U$,
then 
\[
W^{\sharp}=\mathbb{F}\mathfrak{H}\left(\left\langle \mathrm{d}\varphi\right\rangle \right)=\left\langle -B\cos(\psi-\varphi)\frac{\partial}{\partial\psi}+A\frac{\partial}{\partial\varphi}\right\rangle 
\]
there. Accordingly, the subbundle $\left\langle \frac{\partial}{\partial\psi}-\gamma\frac{\partial}{\partial\varphi}\right\rangle $
is complementary to $W^{\sharp}$ (shrinking $U$ around $\psi=\varphi=0$,
if needed) if we choose the constant $\gamma\neq\frac{A}{B}$. In what follows we will assume that $\gamma$ is a generic constant which, in the end, we will choose to fulfill our requirements.

Define the coordinates 
\[
x:=\psi\;\;\;\textrm{and}\;\;\;y:=\varphi+\gamma\,\psi,
\]
we have that 
\[
\frac{\partial}{\partial x}=\frac{\partial}{\partial\psi}-\gamma\frac{\partial}{\partial\varphi},\;\;\;\frac{\partial}{\partial y}=\frac{\partial}{\partial\varphi},
\]
and consequently
\begin{align*}
\hat{W} & :=\mathbb{F}\mathfrak{H}^{-1}\left(\left\langle \frac{\partial}{\partial x}\right\rangle \right)
\end{align*}
is complementary to $W$ (along $U$). In this way, the first step
is done. \\ Let us mention that the matrix $\mathbb{H}^{-1}$ representing
$\mathbb{F}\mathfrak{H}^{-1}$ {[}see Eq. \eqref{h-1}{]} in the new
coordinates $\left(x,y\right)$ is
\begin{equation}
\mathbb{H}^{-1}=\left[\begin{array}{cc}
A-2b\gamma+C\gamma^2 & b-\gamma C\\
b-\gamma C & C
\end{array}\right],\label{Hc}
\end{equation}
where 
\begin{equation}
b:= B\cos((1+\gamma)x-y);\label{ab}
\end{equation}
while the matrix $\hat{\mathbb{P}}$ of the projection $\hat{\mathfrak{p}}$,
using $\sigma:=\mathbb{F}\mathfrak{H}^{-1}\left(\frac{\partial}{\partial x}\right)$
as a basis for $\hat{W}$ {[}see Eq. \eqref{pkmu}{]}, is given by
the column vector
\begin{equation}
\hat{\mathbb{P}}=\frac{1}{A-\gamma b}\left(\begin{array}{c}
1\\
\gamma
\end{array}\right).\label{Pc}
\end{equation}
Also, the potential energy $h$ in these coordinates reads
\begin{equation}
h\left(x,y\right)=D_{1}\cos\left(x\right)+D_{2}\,\cos\left(\gamma x-y\right).\label{hch}
\end{equation}
\item The general solution of the kinetic equation is given by \eqref{K},
where, according to \eqref{Gp}, \eqref{Hc}, \eqref{ab} and \eqref{Pc},
\[
G\left(x,y\right)=-\frac{2\gamma^2 b_x}{(1+\gamma)(A-b\gamma)},
\]
where we have considered $\gamma\neq -1$.
Concretely, 
\[
K\left(x,y\right)=\xi\left(y\right)\left(A-b\gamma\right)^{-\frac{2\gamma}{1+\gamma}},
\]
being $\xi$ a positive function. 
\item In this case, the function $u$ defined by \eqref{u} is given by
{[}see \eqref{Pc} and \eqref{hch}{]}
\[
u\left(x,y\right)=-\frac{D_{1}\sin\left(x\right)}{A-b\gamma}K\left(x,y\right).
\]
\item Nothing to do. 
\item According to \eqref{htech2}, 
\begin{align*}
\hat{h}\left(x,y\right)= & -\int_{0}^{x}\frac{D_{1}\sin\left(t\right)}{A-\gamma B\cos((1+\gamma)t-y)}K\left(t,y\right)\,\mathrm{d}t+\frac{\varpi}{2}\,y^{2}.
\end{align*}
\item Since 
\[
M=\frac{\partial u}{\partial x}\left(0,0\right)=\frac{-D_{1}}{A-\gamma B}K\left(0,0\right)
\]
and $K\left(0,0\right)$ and $D_{1}$ are positive, in order for $M$
to be positive, we must choose $\gamma$ such that 
\[
A-\gamma B<0.
\]
This is true if and only if 
\[
\gamma >\frac{A}{B}>0.
\]
In particular, observe that the value $\gamma\neq -1$ we discarded earlier would end up yielding non-positive solutions.

\item Finally, since
\[
\frac{\partial u}{\partial y}\left(0,0\right)=0,
\]
we must take {[}see Eq. \eqref{varfi}{]}
\[
\varpi>0.
\]
\end{enumerate}
%For instance, if we consider identical particles of mass $m$, identical bars of length $L$ and $\lambda=$ we should take $c>\frac{g}{L}\xi(0)$.

%%%%%%%%%%%%%%%%%%%%%%%%%%%%%%%%%%%%%%%%%%%%%%%%%%%%%%%%%%%%%%%%%%%
%%%%%%%%%%%%%%%%%%%%%%%%%%%%%%%%%%%%%%%%%%%%%%%%%%%%%%%%%%%%%%%%%%%

\section{Acknowledgments}

S. Grillo and L. Salomone thank CONICET for its financial support.

\end{document}